\documentclass{article}




    \usepackage[final]{neurips_2024}


\usepackage[utf8]{inputenc} 
\usepackage[T1]{fontenc}    
\usepackage{hyperref}       
\usepackage{url}            
\usepackage{booktabs}       
\usepackage{amsfonts}       
\usepackage{nicefrac}       
\usepackage{microtype}      
\usepackage{xcolor}         
\usepackage{amsmath}
\usepackage{amssymb}
\usepackage{mathtools}
\usepackage{amsthm}
\usepackage[style=numeric-comp, sorting=none]{biblatex}
\usepackage{graphicx,subfigure}
\addbibresource{reference.bib}
\usepackage{algorithm}
\usepackage{algorithmic}
\usepackage{cleveref}
\theoremstyle{plain}
\newtheorem{theorem}{Theorem}[section]

\newtheorem{lemma}[theorem]{Lemma}

\theoremstyle{definition}
\newtheorem{definition}[theorem]{Definition}
\newtheorem{assumption}[theorem]{Assumption}
\theoremstyle{remark}
\newtheorem{remark}[theorem]{Remark}
\usepackage{multirow}
\usepackage{multicol}
\usepackage{framed}
\usepackage{tabstackengine}
\usepackage{makecell}
\usepackage{colortbl}

\title{RFLPA: A Robust Federated Learning Framework against Poisoning Attacks with Secure Aggregation}

%

\author{%
  Peihua Mai \\
  \And
  Ran Yan \\
  National University of Singapore \\
  \And
  Yan Pang \thanks{Correspondence to \texttt{bizpyj@nus.edu.sg}}
}

\begin{document}

\maketitle

\begin{abstract}
Federated learning (FL) allows multiple devices to train a model collaboratively without sharing their data. Despite its benefits, FL is vulnerable to privacy leakage and poisoning attacks. To address the privacy concern, secure aggregation (SecAgg) is often used to obtain the aggregation of gradients on sever without inspecting individual user updates. Unfortunately, existing defense strategies against poisoning attacks rely on the analysis of local updates in plaintext, making them incompatible with SecAgg. To reconcile the conflicts, we propose a robust federated learning framework against poisoning attacks (RFLPA) based on SecAgg protocol. Our framework computes the cosine similarity between local updates and server updates to conduct robust aggregation. Furthermore, we leverage verifiable packed Shamir secret sharing to achieve reduced communication cost of $O(M+N)$ per user, and design a novel dot product aggregation algorithm to resolve the issue of increased information leakage. Our experimental results show that RFLPA significantly reduces communication and computation overhead by over $75\%$ compared to the state-of-the-art secret sharing method, BREA, while maintaining competitive accuracy.
\end{abstract}

\section{Introduction}
\label{sec:intro}

Federated learning (FL) \cite{mcmahan2016federated, ye2023feddisco, ye2024openfedllm, wang2020tackling, yefake} is a promising machine learning technique that has been gaining attention in recent years. It enables numerous devices to collaborate on building a machine learning model without sharing their data with each other. Compared with traditional centralized machine learning, FL preserves the data privacy by ensuring that sensitive data remain on local devices. 

Despite its benefits, FL still has two key concerns to be addressed. Firstly, there is a threat of privacy leakage from local update. Recent works have demonstrated that the individual updates could reveal sensitive information, such as properties of the training data \cite{melis2019exploiting, fredrikson2015model}, or even allows the server to reconstruct the training data \cite{zhu2019deep, chai2020secure}. The second issue is that FL is vulnerable to poisoning attacks. Indeed, malicious users could send manipulated updates to corrupt the global model at their will \cite{bagdasaryan2020backdoor}. The poisoning attacks may degrade the performance of the model, in the case of \textit{untargeted attacks}, or bias the model's prediction towards a specific target labels, in the case of \textit{targeted attacks} \cite{huang2011adversarial}.

Secure aggregation (SecAgg) has become a potential solution to address the privacy concern. Under SecAgg protocol, the server could obtain the sum of gradients without inspecting individual user updates \cite{bonawitz2017practical,bell2020secure}. However, this protocol poses a significant challenge in resisting poisoning attacks in FL. Most defense strategies \cite{blanchard2017machine, cao2021fltrust} require the server to access local updates to detect the attackers, which increases the risk of privacy leakage. The contradiction makes it difficult to develop a FL framework that simultaneously resolves the privacy and robustness concerns.

To our best knowledge, BREA is the state-of-the-art FL framework that defends against poisoning attacks using secret sharing-based SecAgg protocol \cite{so2020byzantine}. Based on verifiable secret sharing, their framework leverages pairwise distances to remove outliers. However, their work is limited by the scaling concerns arising from computation and communication complexity. For a model with dimension $M$ and $N$ selected clients, the framework incurs $O(MN+N)$ communication per user, and $O((N^2+MN)\log^2 N \log \log N)$ computation for the server due to the costly aggregation rule. Furthermore, BREA makes unrealistic assumptions that the users could establish direct communication channels with other mobile devices. 

To address the above challenge, we propose a robust federated learning framework against poisoning attacks (RFLPA) based on SecAgg protocol. We leverage verifiable packed Shamir secret sharing to compute the cosine similarity and aggregate gradients in a secure manner with reduced communication cost of $O(M+N)$ per user. To resolve the increased information leakage from packed secret sharing, we design a dot product aggregation protocol that only reveals a single value of the dot product to the server. Our framework requires the server to store a small and clean root dataset as the benchmark. Each user relies on the server to communicate the secret with each other, and utilizes encryption and signature techniques to ensure the secrecy and integrity of messages. The implementation is available at https://github.com/NusIoraPrivacy/RFLPA.

Our main contributions involves the following:

(1) We propose a federated learning framework that overcomes privacy and robustness issues with reduced communication cost, especially for high-dimensional models. The convergence analysis and empirical results show that our framework maintains competitive accuracy while reducing communication cost significantly.

(2) To protect the privacy of local gradients, we propose a novel dot product aggregation protocol. Directly using packed Shamir secret sharing for dot product calculation can result in information leakage. Our dot product aggregation algorithm addresses this issue by ensuring that the server only learns the single value of the dot product and not other information about the local updates. Furthermore, the proposed protocol enables degree reduction by converting the degree-2d partial dot product shares into degree-d final product shares.

(3) Our framework guarantees the secrecy and integrity of secret shares for a server-mediated network model using encryption and signature techniques.

\section{Literature Review}

\subsection{Defense against Poisoning Attacks.} 
Various robust aggregation rules have been proposed to defend against poisoning attacks. KRUM selects the benign updates based on the pairwise Euclidean distances between the gradients \cite{blanchard2017machine}. Yin et al. \cite{yin2018byzantine} proposes two robust coordinate-wise aggregation rules that computes the median and trimmed mean at each dimension, respectively. Bulyan \cite{guerraoui2018hidden} selects a set of gradients using Byzantine–resilient algorithm such as KRUM, and then aggregates the updates with trimmed mean. RSA \cite{shi2022challenges} adds a regularization term to the objective function such that the local models are encouraged to be similar to the global model. In FLTrust \cite{cao2021fltrust}, the server maintains a model on its clean root dataset and computes the cosine similarity to detect the malicious users. The aforementioned defense strategies analyze the individual gradients in plaintext, and thus are susceptible to privacy leakage.

\subsection{Robust Privacy-Preserving FL.}
To enhance privacy and resist poisoning attacks, several frameworks have integrated homomorphic encryption (HE) with existing defense techniques. Based on Paillier cryptosystem, PEFL \cite{liu2021privacy} calculates the Pearson correlation coefficient between coordinate-wise medians and local gradients to detect malicious users. PBFL \cite{miao2022privacy} uses cosine similarity to identify poisonous gradients and adopted fully homomorphic encryption (FHE) to ensure security. ShieldFL \cite{ma2022shieldfl} computes cosine similarity between encrypted gradients with poisonous baseline for Byzantine-tolerance aggregation. The above approaches inherit the costly computation overhead of HE. Furthermore, they rely two non-colluding parties to perform secure computation and thus might be vulnerable to privacy leakage. Secure Multi-party Computation (SMC) is an alternative to address the privacy concern. To the best of our knowledge, BREA \cite{so2020byzantine} is the first work that developed Byzantine robust FL framework using verifiable Shamir secret sharing. However, their method suffers high communication complexity of SMC and high computation complexity of KRUM aggregation protocol. Refer to Appendix \ref{app:compareframe} for a comprehensive comparison among existing protocols.

This paper explores the integration of SMC with defense strategy against poisoning attacks. We develop a framework that reduces communication cost, employs a more efficient aggregation rule and guarantees the security for a server-mediated model. 

\section{Problem Formulation and Background}
\subsection{Problem Statement}
We assume that the server trains a model $\mathbf{w}$ with $N$ mobile clients in a federated learning setting. All parties are assumed to be computationally bounded. Each client holds a local dataset $\{D_i\}_{i\in [N]}$, and the server owns a small, clean root dataset $D_0$. The objective is to optimize the expected risk function:
\begin{equation}
    F(\mathbf{w}) = \min_{\mathbf{w}} \mathbb{E}_{D\sim \chi} L(D, \mathbf{w}),
\end{equation}
where $L(D, \mathbf{w})$ is a empirical loss function given dataset $D$. 

In federated learning, the server aggregates local gradients $\mathbf{g}_i^t$ to obtain global gradient $\mathbf{g}^t$ for model update:
\begin{equation}
    \mathbf{g}^t = \sum_{i\in S} \eta_i^t \mathbf{g}_i^t,\ \mathbf{w}^t = \mathbf{w}^{t-1} - \gamma^t \mathbf{g}^t,
\end{equation}
where $\eta_i$ is the weight of client $i$, $\gamma^t$ is the learning rate, and $S$ is the set of selected clients.


\subsection{Adversary Model}
We consider two types of users, i.e., honest users and malicious users. The definitions of honest and malicious users are given as follows.
\begin{definition} [Honest Users] 
A user $u$ is honest if and only if $u$ honestly submits its local gradient $g_u$, where $g_u$ is the true gradients trained on its local dataset $D_u$.
\end{definition}
\begin{definition} [Malicious Users] 
A user $u$ is malicious if and only if $u$ is manipulated by an adversary who launches model poisoning attack by submitting poisonous gradients $g_u^*$.
\end{definition}

Server aims to infer users’ information with two types of attacks, i.e., passive inference and active inference attack. In passive inference attack, the server tries to infer users’ sensitive information by the intermediate result it receives from the user or eardrops during communication. In active inference attack, the server would manipulate certain users’ messages to obtain the private values of targeted users.

\subsection{Design Goals}
We aim to design a federated learning system with three goals.

\textbf{Privacy}. Under federated learning, users might still be concerned about the information leakage from individual gradients. To protect privacy, the server shouldn't have access to local update of any user. Instead, the server learns only the aggregation weights and global gradients, ensuring that individual user data remains protected.

\textbf{Robustness}. We aim to design a method resilient to model poisonous attack, meaning that the model accuracy should be within a reasonable range under malicious clients.

\textbf{Efficiency}. Our framework should maintain computation and communication efficiency even if it is operated on high dimensional vectors.

\subsection{Cryptographic Primitives}
In this section we briefly describe cryptographic primitives for our framework. For more details refer to Appendix \ref{app:cryptoprimitive}.

\textbf{Packed Shamir Secret Sharing.} This study uses a generalization of Shamir secret sharing scheme \cite{shamir1979share}, known as "packed secret-sharing" that allows to represent multiple secrets by a single polynomial \cite{franklin1992communication}. A degree-$d$ ($d\geq l-1$) packed Shamir sharing of $\mathbf{s} = (s_1, s_2, ..., s_l)$ stores the $l$ secrets at a polynomial $f(\cdot)$ of degree at most $d$. The secret sharing scheme requires $d+1$ shares for reconstruction, and any $d-l+1$ shares reveals no information of the secret.

\textbf{Key Exchange.} The framework relies on Diffie–Hellman key exchange protocol \cite{diffie2022new} that allows two parties to establish a secret key securely. 

\textbf{Symmetric Encryption.} Symmetric encryption guarantees the secrecy for communication between two parties \cite{delfs2007symmetric}. The encryption and decryption are conducted with the same key shared by both communication partners.

\textbf{Signature Scheme.} To ensure the integrity and authenticity of message, we adopt a UF-CMA secure signature scheme \cite{kaur2012digital, katz2010digital}.

\section{Framework}
\subsection{Overview}
\begin{figure*}[htbp!]
    \centering    \centerline{\includegraphics[width=1\columnwidth]{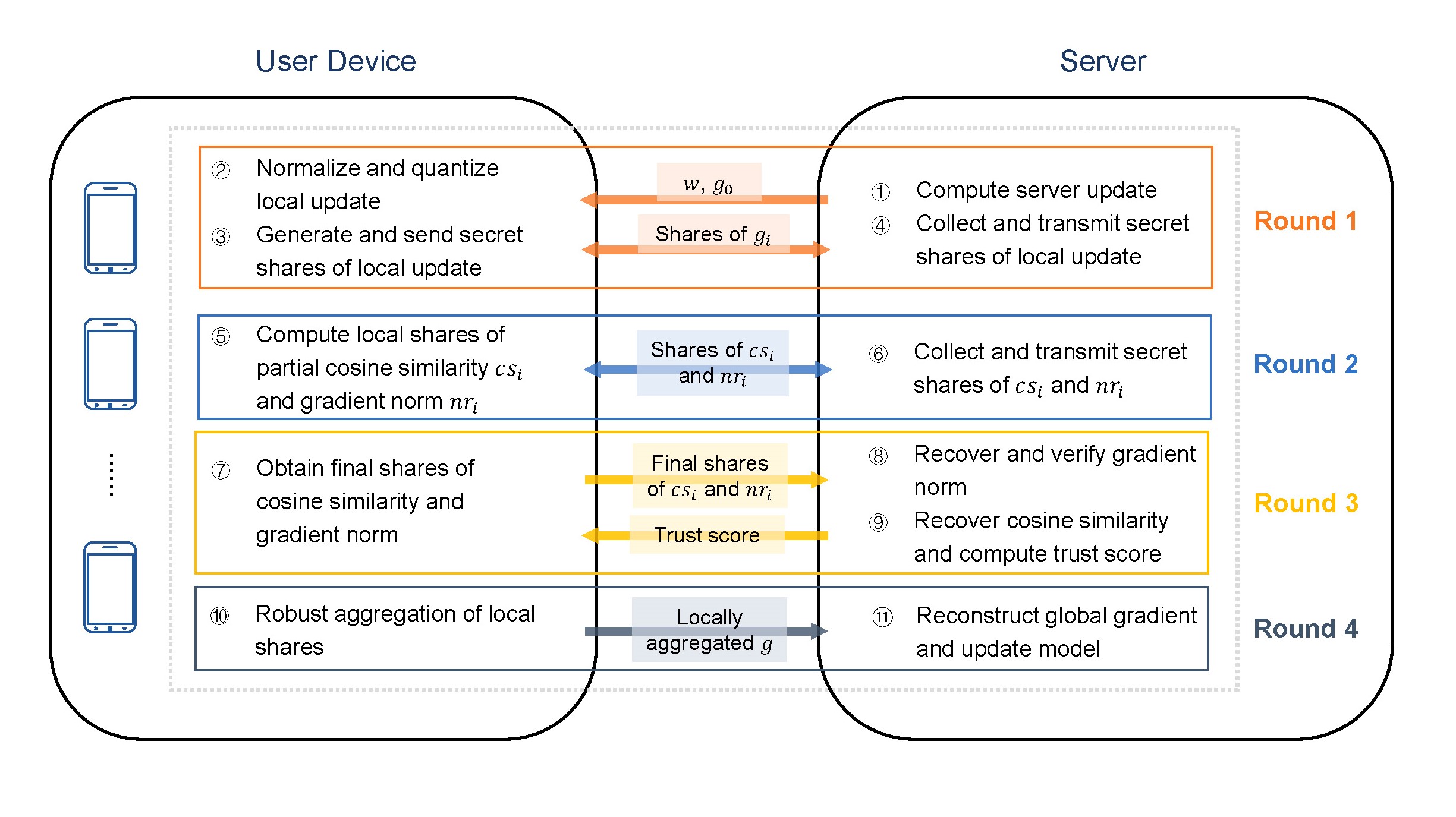}}
    \caption{Overall framework}
    \label{fig:overview}
\end{figure*}
Figure \ref{fig:overview} depicts the overall framework of our robust federated learning algorithm. The algorithm consists of four rounds:

\textbf{Round 1:} each client receives the server update $g_0$, computes their updates normalized by $g_0$, and distributes the secret shares of their updates to other clients.

\textbf{Round 2:} each client computes the local shares of partial dot product for gradient norm and cosine similarity, and conducts secret re-sharing on the local shares.

\textbf{Round 3:} each client obtains final shares of partial dot product for gradient norm and cosine similarity, and transmits the shares to server. Then the server would verify the gradient norm, recover cosine similarity, and compute the trust score for each client.

\textbf{Round 4:} on receiving the trust score from the server, each client conducts robust aggregation on the secret shares locally, and transmits the secret shares of aggregated gradient to the server. The server finally reconstructs the aggregation on the secret shares.


To address increased information leakage caused by packed secret sharing, we design a dot product aggregation protocol to sum up the dot product over sub-groups of elements. 
Refer to Appendix \ref{app:algorithm} for the algorithm to perform robust federated learning.

\subsection{Normalization and Quantization}
To limit the impact of attackers, we follow \cite{cao2021fltrust} to normalize each local gradient based on the server model update:
\begin{equation}
    \mathbf{\bar{g}}_i = \frac{\|\mathbf{g}_0\|}{\|\mathbf{g}_i\|} \cdot \mathbf{g}_i,
\end{equation}
where $\mathbf{g}_i$ is the local gradient of the $i$th client, and $\mathbf{g}_0$ is the server gradient obtained from clean root data.

Each client performs local gradient normalization, and the server validates if the updates are truly normalized. The secret sharing scheme operates over finite field $\mathbb{F}_p$ for some large prime number $p$, and thus the user should quantize their normalized update $\mathbf{\bar{g}}_i$. The quantization poses challenge on normalization verification, as $\|\mathbf{\bar{g}}_i\|$ might not be exactly equal to $\|\mathbf{g}_0\|$ after being converted into finite field. 

To address this issue, we define the following rounding function:
\begin{equation}
    Q(x) = \left\{
    \begin{array}{lc}
    \lfloor qx \rfloor/q, & x \geq 0\\
    (\lfloor qx \rfloor+1)/q, & x < 0\\
    \end{array}
    \right.,
\end{equation}
where $\lfloor qx \rfloor$ is the largest integer less than or equal to $qx$.

Therefore, the server could verify that $\|\mathbf{\bar{g}}_i\| \leq \|\mathbf{g}_0\|$, which is ensured by the quantization method.

\subsection{Robust Aggregation Rule}
Consistent with FLTrust\cite{cao2021fltrust}, our framework conducts robust aggregation using the cosine similarity between users' and server's updates. The trust score of user $i$ is:
\begin{equation}
\label{eq:trustscore}
    TS_i = \max\left(0, \frac{\langle\mathbf{g}_i, \mathbf{g}_0\rangle}{\|\mathbf{g}_i\|\|\mathbf{g}_0\|}\right) = \max\left(0, \frac{\langle\mathbf{\bar{g}}_i, \mathbf{g}_0\rangle}{\|\mathbf{g}_0\|^2}\right),
\end{equation}
where we clip the negative cosine similarity to zero to avoid the impact of malicious clients.

The global gradient is then aggregated by:
\begin{equation}
\label{eq:robustagg}
    \mathbf{g} = \frac{1}{\sum_{i=1}^N TS_i} \sum_{i=1}^N TS_i \cdot \mathbf{\bar{g}}_i.
\end{equation}

Finally, we use the gradient to update the global model:
\begin{equation}
    \mathbf{w} \leftarrow \mathbf{w} - \gamma \mathbf{g}.
\end{equation}

Our framework leverages the robust aggregation rule consistent with FLTrust due to its advantages including low computation cost, the absence of a requirement for prior knowledge about number of poisoners, defend against majority number of poisoners, and compatibility with Shamir Secret Sharing. Appendix \ref{app:compaggrule} details the comparison between FLTrust and existing robust aggregation rules.

\subsection{Verifiable Packed Secret Sharing}
The core idea of packed secret sharing is to encode $l$ secrets within a single polynomial. Consequently, the secret shares of local updates generated by each user would reduce from $NM$ to $NM/l$. By selecting $l=O(N)$, the per-user communication cost at secret sharing stage can be decreased to $O(M+N)$. We assume that the prime number $P$ is large enough such that $P>\max\{N\|\mathbf{g}_0\|, \|\mathbf{g}_0\|^2\}$ to avoid overflow.

One issue with secret sharing is that a malicious client may send invalid secret shares, i.e., shares that are not evaluated at the same polynomial function, to break the training process. To address this issue, the framework utilizes the verifiable secret sharing scheme from \cite{kate2010constant}, which generates constant size commitment to improve communication efficiency. We construct the verifiable secret shares for both local gradients and partial dot products described in Section \ref{sec:dotprodagg}. During verifiable packed secret sharing, the user would send the secret shares $\mathbf{s}$, commitment $\mathcal{C}$, and witness $w_l$ to other users. A commitment is a value binding to a polynomial function $\phi(x)$, i.e., the underlying generator of the secret shares, without revealing it. A witness allows others to verify that the secret share $s_l$ is generated at $l$ of the polynomial (see Appendix \ref{app:vpss} for more details). 

\subsection{Dot Product Aggregation}

Directly applying packed secret sharing may increase the risk of information leakage when calculating cosine similarity and gradient norm. In the example provided by Figure \ref{fig:dotprod}, the gradient vectors are created as secret shares by packing $l$ secret into a polynomial function. Following the local similarity computations by each client, the server can reconstruct the element-wise product between the two gradients, which makes it easy to recover the user's gradient $\Tilde{g}_i$ from the reconstructed metric. On the other hand, our proposed protocol ensures that only the single value of dot product is released to the server. Based on this, we introduce a term \textit{partial dot product}, or \textit{partial cosine similarity (norm square)} depending on the input vectors, defined as follows:

\textit{\textbf{Partial dot product} represents the multiple dot products of several subgroups of elements from input vectors rather than a single dot product value.}

Another related concept is \textit{final dot product}, referring to the single value of dot products between two vectors. For example, given two vectors $\boldsymbol{v}_1=(2, -1, 4, 5, 6, 3)$ and $\boldsymbol{v}_2=(1, 2, 0, 3, -2, 1)$, the reconstructed \textit{partial dot product} could be $(0, 15, -9)$ if we pack $2$ elements into a secret share, while the \textit{final dot product} is $6$. If each client directly uploads the shares from local dot product computation, the server would reconstruct a vector of partial cosine similarity (norm square) and thus learn more gradient information.
\label{sec:dotprodagg}
\begin{figure}[htbp!]
    \centering    \centerline{\includegraphics[width=0.7\columnwidth]{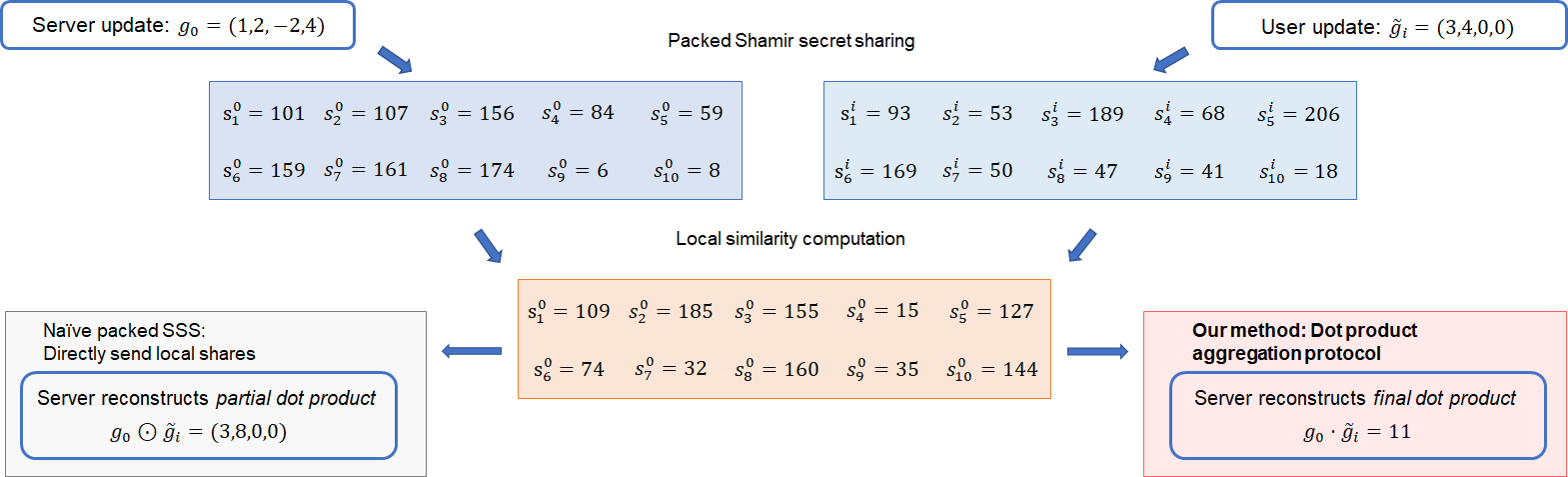}}
    \caption{Cosine similarity computation on packed secret sharing}
    \label{fig:dotprod}
\end{figure}
To ensure that the server only has access to final cosine similarity (norm square), we design a dot product aggregation algorithm based on secret re-sharing that allows the users to sum up the dot products over subgroups. 

Suppose that the user $i$ creates a packed secret sharing $\mathbf{V}^i=\{v^i_{jk}\}_{j\in [N], k\in [\lceil m/l \rceil]}$ of $\mathbf{\Tilde{g}}_i=(g_1^i, g_2^i, ..., g_M^i)$, by packing each $l$ elements into a secret. On receiving the secret shares, each user $i$ can compute the vectors $\mathbf{cs}^i=(cs_1^i, cs_2^i, ..., cs_N^i)$ and $\mathbf{nr}^i=(nr_1^i, nr_2^i, ..., nr_N^i)$:
\begin{equation}
\begin{gathered}
\label{eq:cosnorm}
    cs_j^i = \sum_l v_{il}^j \cdot v_{il}^0,\ nr_j^i = \sum_l v_{il}^j \cdot v_{il}^j,
\end{gathered}
\end{equation}
 where $cs_j^i$ and $nr_j^i$ denotes the $i^{th}$ share of partial cosine similarity and partial gradient norm square for user $j$'s gradient.

The partial cosine similarity (or gradient norm square) could be further aggregated by the procedure below in four steps. 

\textbf{Step 1: Secret resharing of partial dot product.} Each user $i$ could construct the verifiable packed secret shares of $\mathbf{cs}^i$ (or $\mathbf{nr}^i$) by representing $p$ secrets on a polynomial:
\begin{equation}
    \mathbf{S}^i = \left(\begin{array}{ccc}
    s_{11}^i& \dots & s_{1\lceil N/p \rceil}^i \\
    \vdots & \ddots & \vdots \\
    s_{N1}^i & \dots & s_{N\lceil N/p \rceil}^i
    \end{array}
    \right),
\end{equation}
where $s_{jk}^i$ denotes the share sent to user $j$ for the $k^{th}$ group of elements in vector $\mathbf{cs}^i$ (or $\mathbf{nr}^i$). By choosing $p=O(N)$, each user will generate $O(N)$ secret shares.

\textbf{Step 2: Disaggregation on re-combination vector.} After distributing the secret shares, each user $i$ receives a re-combination vector $\mathbf{s}_{ik} = (s_{ik}^1, s_{ik}^2, ..., s_{ik}^N)$ for $k\in [\lceil N/p \rceil]$. Since we pack $l$ elements for the secret shares of partial dot product, this step aims to transform the $\mathbf{s}_{ik}$ into $l$ vectors, with each vector representing one element. For each $j\in [l]$, user $i$ locally computes:
\begin{equation}
\label{eq:multi2single}
    \mathbf{\Tilde{h}}_{jk}^i = \mathbf{s}_{ik} B_{e_j}^{-1} Chop_d,
\end{equation}
where $B_{e_j}$ is an $n$ by $n$ matrix whose $(i,k)$ entry is $(\alpha_k-e_j)^{i-1}$, and $Chop_d$ is an $n$ by $n$ matrix whose $(i,k)$ entry is $1$ if $1\leq i=k\leq d$ and $0$ otherwise. After this operation, the degree-2d partial dot product shares are transformed into degree-d shares.

\textbf{Step 3: Aggregation along packed index.} The new secrets are summed up along $j\in [l]$ at client side:
\begin{equation}
\label{eq:aggsubgroup}
    \mathbf{h}_k^i = \sum_{j=1}^l \mathbf{\Tilde{h}}_{jk}^i.
\end{equation}

\textbf{Step 4: Decoding for final secret shares.} User $i$ can derive the final secret shares $x_k^i$ by recovering from $\mathbf{h}_k^i=(h_{k1}^i, h_{k2}^i, ..., h_{kN}^i)$ using Reed-Solomon decoding. Noted that $\{x_k^i\}_{k\in [\lceil N/p \rceil]}$ becomes a packed secret share of dot products of degree $d$ (see Appendix \ref{app:secretreshare}). Therefore, the server could recover the cosine similarity (or gradient norm square) for all users on receiving the final shares from sufficient users.

\subsection{Secret Sharing over Insecure Channel}
This framework relies on a server-mediated communication channel for the following reasons: (1) it's challenging for mobile clients to establish direct communication with each other and authenticate other devices; (2) a server could act as central coordinator to ensure that all clients have access to the latest model. On the other hand, the secret sharing stage requires to maintain the privacy and integrity of secret shares.

To protect the secrecy of message, we utilize key agreement and symmetric encryption protocol. The clients establish the secret keys with each other through Diffie–Hellman key exchange protocol. During secret sharing, each client $u$ uses the common key $k_{uv}$ to encrypt the message sent to client $v$, and client $v$ could decrypt the cyphertext with the same key.

Another concern is that the server may falsify the messages transmitted between clients. Signature scheme is adopted to prevent the active attack from server. We assume that all clients receive their private signing key and public signing keys of all other clients from a trusted third party. Each client $i$ generates a signature $\sigma_i$ along with the message $m$, and other clients verify the message using client $i$'s public key $d_i^{PK}$.

\section{Theoretical Analysis}
\subsection{Complexity Analysis}
In this section, we analyze the per iteration complexity for $N$ selected clients, and model dimension of $M$, and summarize the complexity in Table \ref{tab:complexity}. Further details of the complexity analysis are available in Appendix \ref{det:complexity analysis}. One important observation is that the communication complexity of our protocol reduces from $O(MN+N)$ to $O(M+N)$. Furthermore, the server-side computation overhead is reduced to $O((M+N) \log^2 N \log \log N)$, benefiting from the efficient aggregation rule and packed secret sharing. It should be noted that while the BERA protocol has similar server communication complexity, it makes an unrealistic assumption that users can share secrets directly with each other, thereby saving the server's overhead.

\begin{table*}[htp]
\caption{Complexity summary of RFLPA and BERA}
\label{tab:complexity}
\begin{center}
\begin{scriptsize}
\begin{tabular}{lcccc}
\toprule
 &\multicolumn{2}{c}{RFLPA} & \multicolumn{2}{c}{BERA} \\
 & Computation & Communication & Computation & Communication\\
 \midrule
 Server & $O((M+N) \log^2 N \log \log N)$& $O((M+N)N)$ & $O((N^2+MN) \log^2 N \log \log N)$ & $O(MN+N^2)$\\
 User & $O((M+N^2)\log^2 N)$& $O((M+N))$ & $O(MN\log^2 N + MN^2)$ & $O(MN+N)$\\
\bottomrule
\end{tabular}
\end{scriptsize}
\end{center}
\end{table*}

\subsection{Security Analysis}
The security analysis is conducted for Algorithm \ref{alg:robustsecagg}. Given a security parameter $\kappa$, a server $S$, and any subsets of users $\mathcal{U}$, let ${\rm REAL}_{\mathcal{C}}^{\mathcal{U},t,\kappa}$ be a random variable
representing the joint view of parties in $\mathcal{C} \subseteq \mathcal{U} \cup S$ where the threshold is set to $t$, and $\mathcal{U}_i$ be the subset of respondents at round $i$ such that $\mathcal{U} \supseteq \mathcal{U}_1 \supseteq \mathcal{U}_2 \supseteq \mathcal{U}_3 \supseteq \mathcal{U}_4$. We show that the joint view of any group of parties from $\mathcal{C}$ with users less than $t$ can be simulated given the inputs of clients in that group, trust score $\{TS_j\}_{j\in \mathcal{U}_1}$, and global gradient $\mathbf{g}$. In other words, \textit{the server learns no information about clients' input except the global gradient and trust score}.
\begin{theorem} [Security against active server and clients]
\label{thm:security} 
    There exists a PPT simulator {\rm SIM} such that for all $t\leq K-L$, $|\mathcal{C}\backslash \{S\}| < t$, the output of {\rm SIM} is computationally indistinguishable from the output of ${\rm REAL}_{\mathcal{C}}^{\mathcal{U},t,\kappa}$:
    \begin{equation}
        {\rm REAL}_{\mathcal{C}}^{\mathcal{U},t,\kappa}(\mathbf{x}_{\mathcal{U}}) \equiv  {\rm SIM}_{\mathcal{C}}^{\mathcal{U},t,\kappa}(\mathbf{x}_{\mathcal{U}})
    \end{equation}
    where "$\equiv$" represents computationally indistinguishable.
\end{theorem}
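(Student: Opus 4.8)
The plan is to follow the standard ideal/real simulation paradigm: I would construct the simulator $\mathrm{SIM}$ incrementally through a sequence of hybrid experiments, beginning with the real execution and ending with a fully simulated one, and argue computational indistinguishability between each pair of consecutive hybrids by invoking the security of one cryptographic primitive at a time. First I would fix the corrupted set $\mathcal{C}$ with $|\mathcal{C} \setminus \{S\}| < t$ and spell out precisely what the joint view contains round by round: in Round 1, the packed secret shares, commitments, and witnesses of honest gradients that the corrupted parties receive; in Rounds 2--3, the re-shared partial dot product shares and the degree-reduced final shares; in Round 4, the shares of the aggregated gradient; together with all ciphertexts, Diffie--Hellman public values, and signatures routed through the (possibly adversarial) server. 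The simulator is given only the corrupted parties' inputs $\mathbf{x}_{\mathcal{C}}$ and the permitted leakage $\{TS_j\}_{j \in \mathcal{U}_1}$ and $\mathbf{g}$.

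The hybrid sequence would then proceed as follows. Hybrid $H_0$ is the real execution. In $H_1$ I would replace the symmetric-key ciphertexts exchanged between honest parties with encryptions of fixed dummy messages; indistinguishability follows from IND-CPA security of the encryption scheme together with the security of key exchange, since keys shared among honest parties remain hidden from $\mathcal{C}$. In $H_2$ I would replace the packed Shamir shares that honest parties hand to corrupted parties with uniformly random shares; because the corruption bound $|\mathcal{C} \setminus \{S\}| < t \leq K - L$ sits below the privacy threshold of the packed scheme, the shares held by $\mathcal{C}$ are already distributed independently of the honest secrets, so this step is statistically indistinguishable. In $H_3$ I would simulate the commitments and witnesses using the hiding property of the constant-size polynomial commitment scheme, producing commitments that bind to simulated polynomials yet are indistinguishable from real ones. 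Finally, I would have the simulator sample the residual shares revealed to the server in Rounds 3--4 so that they reconstruct exactly to the prescribed $\{TS_j\}_{j \in \mathcal{U}_1}$ and $\mathbf{g}$, which is feasible because these are the only values ever reconstructed and any completion of the honest shares is consistent with them; at this point the experiment coincides with $\mathrm{SIM}$.

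The hard part will be the active adversary. Both the server and the corrupted clients may deviate, so the simulator must extract each corrupted party's effective contribution and show that the honest parties' consistency checks behave identically in the real and simulated worlds. For corrupted clients, I would rely on the binding property of the commitments and the verifiability of the packed secret sharing to argue that every malformed share is either rejected, dropping that party from the respondent set $\mathcal{U}_i$, or is bound to a well-defined secret that the simulator can read off, so the extracted inputs are consistent with the ideal functionality. For the active server, I would invoke UF-CMA security of the signature scheme to conclude that any tampering with routed messages is detected except with negligible probability, so the server cannot make an honest party accept a forged share; this lets the simulation faithfully reproduce the abort/continue pattern of the real protocol.

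I expect the most delicate point to be threading these extraction and detection arguments through the degree-reduction step of the dot-product aggregation protocol, where degree-$2d$ partial dot product shares are converted into degree-$d$ shares via the $B_{e_j}^{-1}$ and $Chop_d$ transforms. There the simulator must guarantee that the reduced shares exposed to $\mathcal{C}$ remain uniform subject only to the single revealed dot product value, and that the linear re-sharing and disaggregation operations do not covertly leak any partial cosine similarity or norm component. Establishing that the simulated reduced shares carry exactly the same information as in $H_2$--$H_3$, and no more, is where most of the technical care will be required; the remaining indistinguishability claims reduce to routine invocations of the primitives already assumed.
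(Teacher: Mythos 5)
Your proposal follows essentially the same route as the paper's proof: a hybrid argument that replaces honest parties' Shamir shares with shares of uniformly random vectors (justified by the sub-threshold corruption bound and IND-CPA encryption), aborts on signature forgeries to neutralize the active server (UF-CMA), and invokes the DL/$t$-polyDH/$t$-SDH assumptions underlying the verifiable polynomial commitments. The paper's six hybrids are a terser version of your $H_1$--$H_3$ together with the forgery/verification aborts; the extra steps you flag (programming the revealed shares so they reconstruct exactly to $\{TS_j\}_{j\in\mathcal{U}_1}$ and $\mathbf{g}$, and extracting corrupted clients' effective inputs) are left implicit in the paper but are refinements of, not departures from, its argument.
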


\subsection{Correctness against Malicious Users}
In this section, we show that our protocol executes correctly under the following attacks of malicious users: (1) sending invalid secret shares; (2) sending shares from incorrect computation of \ref{eq:robustagg}, \ref{eq:cosnorm}, \ref{eq:multi2single}, or \ref{eq:aggsubgroup}. Note that adversaries may also create shares from arbitrary gradients, and we left the discussion of such attack to Section \ref{sec:convergence}.

The first attack arises when the user doesn't generate shares from the same polynomial. Such attempt is prevented by verifiable secret sharing that allows for the verification of share validity by testing \ref{eq:bilineartest}. 

The second attack could be addressed by Reed-Solomon codes. For a degree-$d$ packed Shamir secret sharing with $n$ shares, the Reed-Solomon decoding algorithm could recover the correct result with $E$ errors and $S$ erasures as long as $S+2E+d+1\leq n$.

\subsection{Convergence Analysis}
\label{sec:convergence}

\begin{theorem}
\label{convergence}
Suppose Assumption \ref{converge_ass1}, \ref{converge_ass2},  \ref{converge_ass3} in Appendix \ref{app:converge_ass} hold. For arbitrary number of malicioius clients, the difference between the global model $\mathbf{w}^t$ learnt by our algorithm and the optimal $\mathbf{w}^*$ is bounded. Formally, we have the following inequality with probability at least $1-\delta$:
\begin{equation}
    \|\mathbf{w}^t-\mathbf{w}^*\| \leq (1-\rho)^t \|\mathbf{w}^0-\mathbf{w}^*\| + 12\gamma \Delta_1 + \frac{\gamma\sqrt{d}}{q}
\end{equation}
where $\rho = 1-(\sqrt{1-\mu^2/(4L_g^2)}+24\gamma \Delta_2 + 2\gamma L)$, $\Delta_1 = \nu_1 \sqrt{\frac{2}{|D_0|}} \sqrt{d\log 6+\log (3/\delta)}$, $\Delta_2=\nu_2 \sqrt{\frac{2}{|D_0|}} \sqrt{d\log \frac{18L_2}{\nu_2}+\frac{1}{2}d\log \frac{|D_0|}{d} + \log \left(\frac{6\nu_2^2r\sqrt{D_0}}{\alpha_2\nu_1\delta}\right)}$, $L_2=\max\{L,L_1\}$.
\end{theorem}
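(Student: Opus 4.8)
The plan is to follow an FLTrust-style single-step descent argument, adapted to our quantized, secret-shared setting, and then unroll the resulting contraction. Starting from the update $\mathbf{w}^t = \mathbf{w}^{t-1} - \gamma \mathbf{g}$, I would expand
$$\|\mathbf{w}^t - \mathbf{w}^*\|^2 = \|\mathbf{w}^{t-1} - \mathbf{w}^*\|^2 - 2\gamma\langle \mathbf{g}, \mathbf{w}^{t-1}-\mathbf{w}^*\rangle + \gamma^2\|\mathbf{g}\|^2,$$
and aim to establish a per-iteration contraction $\|\mathbf{w}^t-\mathbf{w}^*\| \le (1-\rho)\|\mathbf{w}^{t-1}-\mathbf{w}^*\| + e$, where $e$ is of order $\gamma\Delta_1 + \gamma\sqrt{d}/q$. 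The final statement then follows by iterating the recursion: since the stated $\rho$ lies in $(0,1)$, the geometric series $\sum_{k\ge0}(1-\rho)^k$ converges, collapsing the accumulated error into the additive constants $12\gamma\Delta_1$ and $\gamma\sqrt{d}/q$ while multiplying the initial gap by $(1-\rho)^t$.

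The two statistical quantities $\Delta_1$ and $\Delta_2$ would come from uniform concentration of the empirical server gradient $\mathbf{g}_0$ computed on the clean root set $D_0$. Under the Lipschitz and smoothness assumptions of Appendix \ref{app:converge_ass}, I would first prove a pointwise bound $\|\mathbf{g}_0 - \nabla F(\mathbf{w}^{t-1})\| \le \Delta_1$ with probability $1-\delta$ via a sub-exponential tail inequality; the factor $\sqrt{2/|D_0|}\sqrt{d\log 6 + \log(3/\delta)}$ is exactly what such a bound yields after covering the relevant unit ball with a net of cardinality at most $6^d$ (whence the $d\log 6$) and taking a union bound (whence $\log(3/\delta)$). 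The quantity $\Delta_2$ arises from the stronger requirement that the empirical gradient map be close to the population gradient map \emph{uniformly over the parameter ball}, which needs an $\epsilon$-net at resolution $\sim\sqrt{d/|D_0|}$ together with a Lipschitz-in-$\mathbf{w}$ control of the deviation; this is where the terms $d\log(18L_2/\nu_2)$, $\tfrac12 d\log(|D_0|/d)$, and the failure-probability term $\log(6\nu_2^2 r\sqrt{|D_0|}/(\alpha_2\nu_1\delta))$ enter.

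The core of the argument is controlling the trust-score aggregate $\mathbf{g} = \frac{1}{\sum_i TS_i}\sum_i TS_i\,\bar{\mathbf{g}}_i$ under arbitrary malicious inputs. The structural facts I would exploit are that every normalized gradient satisfies $\|\bar{\mathbf{g}}_i\| \le \|\mathbf{g}_0\|$ (guaranteed by the quantizer $Q$), that $TS_i \ge 0$ with $TS_i = 0$ whenever $\langle \mathbf{g}_i,\mathbf{g}_0\rangle \le 0$, and hence that $\mathbf{g}$ is a convex combination of vectors lying in the half-space $\{v : \langle v,\mathbf{g}_0\rangle \ge 0\}$. This confines $\mathbf{g}$ to a cone around $\mathbf{g}_0$, so a malicious client can only contribute in a direction already correlated with $\mathbf{g}_0$. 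Combining this with the concentration $\mathbf{g}_0 \approx \nabla F(\mathbf{w}^{t-1})$ and with strong convexity, I would lower bound $\langle \mathbf{g}, \mathbf{w}^{t-1}-\mathbf{w}^*\rangle$ and upper bound $\|\mathbf{g}\|$; the clean contraction factor $\sqrt{1-\mu^2/(4L_g^2)}$ emerges because $\mu/(2L_g)$ is precisely the cosine lower bound between $\nabla F(\mathbf{w}^{t-1})$ and the descent direction $\mathbf{w}^{t-1}-\mathbf{w}^*$ implied by strong convexity and smoothness, while the perturbation terms $24\gamma\Delta_2 + 2\gamma L$ in $\rho$ absorb the gap between $\mathbf{g}_0$ and $\nabla F$ together with the second-order $\gamma^2\|\mathbf{g}\|^2$ contribution.

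I expect the main obstacle to be exactly this aggregation bound. Unlike the clean FLTrust setting, the gradients entering the trust score are quantized over $\mathbb{F}_p$ and normalization is only verified up to $\|\bar{\mathbf{g}}_i\| \le \|\mathbf{g}_0\|$, so both the numerator $\langle \bar{\mathbf{g}}_i,\mathbf{g}_0\rangle$ and the denominator $\sum_i TS_i$ are perturbed; I would need to track how the $1/q$ rounding error propagates through the cosine similarity and through the division by $\sum_i TS_i$, which the adversary can attempt to shrink, and confirm that the net effect is the single additive $\gamma\sqrt{d}/q$ term rather than a corruption of the contraction factor. Keeping the denominator bounded away from zero — i.e.\ verifying that the honest clients' positive correlation with $\mathbf{g}_0$ keeps $\sum_i TS_i$ of order one regardless of how many clients are malicious — is the delicate point that lets the ``arbitrary number of malicious clients'' claim go through.
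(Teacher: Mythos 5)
Your scaffolding (per-iteration contraction plus unrolling, with $\Delta_1,\Delta_2$ coming from concentration of the root-dataset gradient, cited rather than reproved) matches the paper, but your core step --- control of the trust-score aggregate --- takes a route that cannot deliver the theorem as stated, and it is not the paper's route. You propose to lower bound $\langle \mathbf{g}, \mathbf{w}^{t-1}-\mathbf{w}^*\rangle$ by exploiting the cone constraint $\langle \bar{\mathbf{g}}_i,\mathbf{g}_0\rangle \ge 0$, and to keep the denominator $\sum_i TS_i$ of order one via the honest clients' correlation with $\mathbf{g}_0$. But the theorem covers an \emph{arbitrary} number of malicious clients, including all of them, so there may be no honest clients for the denominator argument to stand on. Worse, the half-space constraint is too weak for any descent bound: if every client submits the same vector of norm $\|\mathbf{g}_0\|$ with cosine $\epsilon$ to $\mathbf{g}_0$, all trust scores are positive, the aggregate equals that vector, and it can be nearly orthogonal to $\mathbf{g}_0$ --- hence essentially uncorrelated, or negatively correlated, with $\mathbf{w}^{t-1}-\mathbf{w}^*$. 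No useful lower bound on the inner product exists in this regime, so your squared-norm expansion cannot produce a contraction.

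The paper's proof avoids ever needing such a bound. Its key lemma is a crude triangle inequality: since the weights $\eta_i = TS_i/\sum_j TS_j$ are nonnegative and sum to one, and every verified update satisfies $\|\bar{\mathbf{g}}_i\| \le \|\mathbf{g}_0\|$, it follows that $\|\sum_i \eta_i\bar{\mathbf{g}}_i - \nabla F(\mathbf{w}^t)\| \le 3\|\mathbf{g}_0 - \nabla F(\mathbf{w}^t)\| + 2\|\nabla F(\mathbf{w}^t)\| + \sqrt{d}/q$, with no reference at all to the actual values of the trust scores. The term $2\|\nabla F(\mathbf{w}^t)\| \le 2L\|\mathbf{w}^t - \mathbf{w}^*\|$ is then absorbed into the contraction factor (that is exactly where the $2\gamma L$ in $\rho$ comes from); the cited FLTrust concentration lemma converts $\|\mathbf{g}_0-\nabla F\|$ into $8\Delta_2\|\mathbf{w}-\mathbf{w}^*\|+4\Delta_1$, yielding the $24\gamma\Delta_2$ and $12\gamma\Delta_1$ terms; and the factor $\sqrt{1-\mu^2/(4L_g^2)}$ comes from a lemma about the step $\mathbf{w}-\gamma\nabla F(\mathbf{w})$ taken with the \emph{true} gradient, not from any property of the aggregate, contrary to where your sketch places it. In other words, under arbitrary corruption the theorem does not assert that the aggregate induces descent; it asserts only that a convexly-weighted, norm-bounded aggregate cannot push the iterate outward faster than the true-gradient contraction pulls it in, for small enough $\gamma$. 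The one piece of your plan that survives intact is the quantization accounting: the additive $\gamma\sqrt{d}/q$ arises exactly as you expect, from $\|\bar{\mathbf{g}}_0 - \mathbf{g}_0\| \le \sqrt{d}/q$ inside the same triangle inequality.
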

\begin{remark}
$\gamma\sqrt{d}/q$ is the noise caused by the quantization process in our algorithm.
\end{remark}

\section{Experiments}
\subsection{Experimental Setup}
\label{sec:expsetup}

\textbf{Dataset.} We use three standard datasets to evaluation the performance of RFLPA: MNIST \cite{lecun1998gradient}, FashionMNIST (F-MNIST) \cite{xiao2017fashion}, and CIFAR-10 \cite{krizhevsky2009learning}. MNIST and F-MNIST are trained on the neural network classification model composed of two convolutional layers and two fully connected layers, while CIFAR-10 is trained and evaluated with a ResNet-9 \cite{he2016deep} model.

s
\textbf{Attacks.} We simulate two types of poisoning attacks: gradient manipulation attack (untargeted) and label flipping attack (targeted). Under gradient manipulation attack, the malicious users generate arbitrary gradients from normal distribution of mean 0 and standard deviation 200. For label flipping attack, the adversaries flip the label from $l$ to $P-l-1$, where $P$ is the number of classes. We consider the proportion of attackers from $0\%$ to $30\%$.

\subsection{Experiment Results}
\subsubsection{Accuracy Evaluation}
We compare our proposed method with several FL frameworks: FedAvg \cite{konevcny2016federated}, Bulyan \cite{guerraoui2018hidden}, Trim-mean \cite{yin2018byzantine}, local differential privacy (LDP) \cite{naseri2020local}, central differential privacy (CDP) \cite{naseri2020local}, and BREA \cite{so2020byzantine}. Refer to Table \ref{tab:compareframe} for the corse-grained comparison between RFLPA and the baselines. Noted that several baselines are not included in the accuracy comparison because: (i) The security of the some schemes relies on the assumption of two non-colluding parties, which is vulnerable in real life. (ii) Some frameworks entail significant computation costs, rendering their implementation in real-life scenarios impractical (see Appendix \ref{app:overheadhe}).
Table \ref{tab:acc} summarizes the accuracies for different methods under the two attacks. 

When defense strategy is not implemented, the accuracies of FedAvg decrease as the proportion of attackers increases, with a more significant performance drop observed under gradient manipulation attacks. Benefited from the trust benchmark, our proposed framework, RFLPA, demonstrates more stable performance for up to $30\%$ adversaries compared to other baselines. In the absence of attackers, our method achieves slightly lower accuracies than FedAvg, with an average decrease of $2.84\%$, $4.38\%$and $3.46\%$, respectively, for MNIST, F-MNIST, and CIFAR-10 dataset.

\subsubsection{Overhead Analysis}
To verify the effectiveness of our framework on reducing overhead, we compare the per-iteration communication and computation cost for BREA and RFLPA in Figure \ref{fig:comcost}. For each experiment we set the degree as $0.4N$ and encode $0.1N$ elements within a polynomial. 

The left-most graph presents the overhead with different participating client size using the 1.6M parameter model described in Section \ref{sec:expsetup}. For $M \gg N$, the per-client communication complexity for RFLPA remains stable at around 82.5MB, regardless of user size. Conversely, BREA exhibits linear scalability with the number of participating clients. Our framework reduces the communication cost by over $75\%$ compared with BREA.

The second left graph examines the communication overhead for varying model dimensions with 2,000 participating clients. RFLPA achieves a much lower per-client cost than BREA by leveraging packed secret sharing, leading to a $99.3\%$ reduction in overhead.

The right two figures presents the computation cost under varying client size using a MNIST classifier with 1.6M parameters. Benefiting from the packed VSS, RFLPA reduces both the user and server computation overhead by over $80\%$ compared with BREA. 

\begin{table*}[!htb]
\caption{Accuracy under different proportions of attackers. The values denote the mean $\pm$ standard deviation of the performance.}
\label{tab:acc}
\centering
\begin{tiny}
\scalebox{0.92}{
\begin{tabular}{ll|cccc|cccc}
\toprule
 & & \multicolumn{4}{c|}{Gradient Manipulation} & \multicolumn{4}{c}{Label Flipping} \\
 \multicolumn{2}{l|}{Proportion of Attackers} & No & $10\%$ & $20\%$ & $30\%$ & No & $10\%$ & $20\%$ & $30\%$ \\
 \hline
\multirow{3}{*}{FedAvg} & MNIST & $\mathbf{0.98}$ \tiny$\mathbf{\pm 0.0}$\normalsize  & $0.46$ \tiny$\pm 0.1$\normalsize & $0.40$ \tiny$\pm 0.1$\normalsize & $0.32$ \tiny$\pm 0.0$\normalsize & $\mathbf{0.98}$ \tiny$\mathbf{\pm 0.0}$\normalsize & $\mathbf{0.96}$ \tiny$\mathbf{\pm 0.0}$\normalsize & $0.92$ \tiny$\pm 0.0$\normalsize & $0.82$ \tiny$\pm 0.0$\normalsize \\
 & F-MNIST &  $\mathbf{0.88}$ \tiny$\mathbf{\pm 0.0}$\normalsize  & $0.55$ \tiny$\pm 0.0$\normalsize & $0.51$ \tiny$\pm 0.0$\normalsize & $0.45$ \tiny$\pm 0.1$\normalsize & $\mathbf{0.88}$ \tiny$\mathbf{\pm 0.0}$\normalsize & $0.82$ \tiny$\pm 0.0$\normalsize & $0.73$ \tiny$\pm 0.0$\normalsize & $0.69$ \tiny$\pm 0.0$\normalsize \\
 & CIFAR-10 &  $0.76$ \tiny$\pm 0.3$\normalsize  & $0.14$ \tiny$\pm 0.2$\normalsize & $0.13$ \tiny$\pm 0.8$\normalsize & $0.13$ \tiny$\pm 0.2$\normalsize & $0.76$ \tiny$\pm 0.3$\normalsize & $0.72$ \tiny$\pm 1.1$\normalsize & $0.68$ \tiny$\pm 2.7$ & $0.59$ \tiny$\pm 0.8$ \\
 \hline
\multirow{3}{*}{Bulyan} & MNIST & $0.98$ \tiny$\pm 0.0$\normalsize &  $0.92$ \tiny$\pm 0.0$\normalsize & $0.89$ \tiny$\pm 0.0$\normalsize & $0.87$ \tiny$\pm 0.0$\normalsize & $0.98$ \tiny$\pm 0.0$\normalsize &  $0.91$ \tiny$\pm 0.0$\normalsize & $0.90$ \tiny$\pm 0.0$\normalsize & $0.87$ \tiny$\pm 0.0$\normalsize \\
 & F-MNIST & $0.86$ \tiny$\pm 0.0$\normalsize & $0.73$ \tiny$\pm 0.0$\normalsize & $0.71$ \tiny$\pm 0.1$\normalsize & $0.69$ \tiny$\pm 0.0$\normalsize & $0.86$ \tiny$\pm 0.0$\normalsize & $0.76$ \tiny$\pm 0.0$\normalsize & $0.70$ \tiny$\pm 0.1$\normalsize & $0.68$ \tiny$\pm 0.0$\normalsize \\
  & CIFAR-10 &  $\mathbf{0.77}$ \tiny$\mathbf{\pm 1.0}$\normalsize  & $\mathbf{0.73}$ \tiny$\mathbf{\pm 0.8}$\normalsize & $0.45$ \tiny$\pm 1.2$\normalsize & $0.27$ \tiny$\pm 0.6$\normalsize & $\mathbf{0.77}$ \tiny$\mathbf{\pm 1.0}$\normalsize & $\mathbf{0.72}$ \tiny$\mathbf{\pm 0.2}$\normalsize & $0.62$ \tiny$\pm 1.8$\normalsize & $0.40$ \tiny$\pm 0.9$\normalsize \\
 \hline
\multirow{3}{*}{\shortstack{Trim-\\mean}} & MNIST & $0.98$ \tiny$\pm 0.0$\normalsize & $0.95$ \tiny$\pm 0.0$\normalsize & $0.93$ \tiny$\pm 0.0$\normalsize & $0.91$ \tiny$\pm 0.0$\normalsize & $0.98$ \tiny$\pm 0.0$\normalsize & $0.95$ \tiny$\pm 0.0$\normalsize & $0.92$ \tiny$\pm 0.0$\normalsize & $0.90$ \tiny$\pm 0.0$\normalsize \\
 & F-MNIST &  $0.86$ \tiny$\pm 0.0$\normalsize & $0.81$ \tiny$\pm 0.0$\normalsize & $0.74$ \tiny$\pm 0.0$\normalsize & $0.71$ \tiny$\pm 0.0$\normalsize & $0.86$ \tiny$\pm 0.0$\normalsize & $0.78$ \tiny$\pm 0.0$\normalsize & $0.74$ \tiny$\pm 0.0$\normalsize & $0.73$ \tiny$\pm 0.0$\normalsize \\
 & CIFAR-10 &  $0.76$ \tiny$\pm 1.0$\normalsize  & $0.57$ \tiny$\pm 2.1$\normalsize & $0.51$ \tiny$\pm 1.1$\normalsize & $0.47$ \tiny$\pm 2.2$\normalsize & $0.76$ \tiny$\pm 1.0$\normalsize & $0.71$ \tiny$\pm 1.3$\normalsize & $0.68$ \tiny$\pm 0.7$\normalsize & $0.56$ \tiny$\pm 1.1$\normalsize \\
 \hline
 \multirow{3}{*}{LDP} & MNIST & $0.87$ \tiny$\pm 0.1$\normalsize & $0.13$ \tiny$\pm 0.0$\normalsize & $0.10$ \tiny$\pm 0.0$\normalsize & $0.10$ \tiny$\pm 0.0$\normalsize & $0.87$ \tiny$\pm 0.1$\normalsize & $0.87$ \tiny$\pm 0.3$\normalsize & $0.83$ \tiny$\pm 1.2$\normalsize & $0.77$ \tiny$\pm 2.1$\normalsize \\
 & F-MNIST &  $0.74$ \tiny$\pm 0.1$\normalsize & $0.59$ \tiny$\pm 0.4$\normalsize & $0.53$ \tiny$\pm 1.2$\normalsize & $0.12$ \tiny$\pm 0.0$\normalsize & $0.74$ \tiny$\pm 0.1$\normalsize & $0.63$ \tiny$\pm 0.5$\normalsize & $0.62$ \tiny$\pm 0.2$\normalsize & $0.59$ \tiny$\pm 1.2$\normalsize \\
 & CIFAR-10 & $0.14$ \tiny$\pm 0.2$\normalsize  &  $0.14$ \tiny$\pm 0.2$\normalsize & $0.12$ \tiny$\pm 0.3$\normalsize & $0.12$ \tiny$\pm 0.1$\normalsize & $0.14$ \tiny$\pm 0.2$\normalsize & $0.14$ \tiny$\pm 0.2$\normalsize & $0.14$ \tiny$\pm 0.3$\normalsize & $0.13$ \tiny$\pm 0.1$\normalsize \\
 \hline
  \multirow{3}{*}{CDP} & MNIST & $0.96$ \tiny$\pm 0.0$\normalsize & $0.96$ \tiny$\pm 0.0$\normalsize & $0.95$ \tiny$\pm 0.0$\normalsize & $0.94$ \tiny$\pm 0.0$\normalsize & $0.96$ \tiny$\pm 0.0$\normalsize & $0.96$ \tiny$\pm 0.0$\normalsize & $0.95$ \tiny$\pm 0.3$\normalsize & $0.91$ \tiny$\pm 0.2$\normalsize \\
 & F-MNIST &  $0.83$ \tiny$\pm 0.1$\normalsize & $0.51$ \tiny$\pm 0.1$\normalsize & $0.41$ \tiny$\pm 0.0$\normalsize & $0.34$ \tiny$\pm 0.1$\normalsize & $0.83$ \tiny$\pm 0.1$\normalsize & $0.81$ \tiny$\pm 0.5$\normalsize  &  $0.79$ \tiny$\pm 0.0$\normalsize & $0.78$ \tiny$\pm 0.7$\normalsize  \\
 & CIFAR-10 &  $0.71$ \tiny$\pm 1.2$\normalsize  & $0.12$ \tiny$\pm 0.5$\normalsize & $0.12$ \tiny$\pm 0.3$\normalsize & $0.12$ \tiny$\pm 0.3$\normalsize & $0.71$ \tiny$\pm 1.2$\normalsize & $0.68$ \tiny$\pm 0.7$\normalsize & $0.66$ \tiny$\pm 1.5$\normalsize & $0.63$ \tiny$\pm 1.3$\normalsize \\
 \hline
 \multirow{3}{*}{BREA} & MNIST & $0.94$ \tiny$\pm 0.0$\normalsize &  $0.93$ \tiny$\pm 0.0$\normalsize & $0.93$ \tiny$\pm 0.0$\normalsize & $0.93$ \tiny$\pm 0.0$\normalsize & $0.94$ \tiny$\pm 0.0$\normalsize &  $0.94$ \tiny$\pm 0.0$\normalsize & $0.93$ \tiny$\pm 0.0$\normalsize & $0.93$ \tiny$\pm 0.0$\normalsize \\
 & F-MNIST & $0.84$ \tiny$\pm 0.0$\normalsize & $0.83$ \tiny$\pm 0.0$\normalsize & $0.82$ \tiny$\pm 0.0$\normalsize & $0.81$ \tiny$\pm 0.0$\normalsize & $0.84$ \tiny$\pm 0.0$\normalsize & $0.84$ \tiny$\pm 0.0$\normalsize & $0.82$ \tiny$\pm 0.0$\normalsize & $0.81$ \tiny$\pm 0.0$\normalsize \\
 & CIFAR-10 &  $0.70$ \tiny$\pm 1.0$\normalsize  & $0.69$ \tiny$\pm 1.1$\normalsize & $0.68$ \tiny$\pm 1.9$\normalsize & $0.68$ \tiny$\pm 0.7$\normalsize & $0.70$ \tiny$\pm 1.0$  & $0.70$ \tiny$\pm 2.2$ & $0.67$ \tiny$\pm 0.9$\normalsize & $0.65$ \tiny$\pm 2.7$\normalsize \\
 \hline
  \rowcolor[gray]{.8} & MNIST & $0.96$ \tiny$\pm 0.0$\normalsize  & $\mathbf{0.96}$ \tiny$\mathbf{\pm 0.0}$\normalsize & $\mathbf{0.95}$ \tiny$\mathbf{\pm 0.0}$\normalsize & $\mathbf{0.95}$ \tiny$\mathbf{\pm 0.0}$\normalsize & $0.96$ \tiny$\pm 0.0$\normalsize  & $0.96$ \tiny$\pm 0.0$\normalsize  & $\mathbf{0.95}$ \tiny$\mathbf{\pm 0.0}$\normalsize & $\mathbf{0.95}$ \tiny$\mathbf{\pm 0.0}$\normalsize \\
 \rowcolor[gray]{.8} & F-MNIST & $0.84$ \tiny$\pm 0.0$\normalsize  & $\mathbf{0.84}$ \tiny$\mathbf{\pm 0.0}$\normalsize  & $\mathbf{0.83}$ \tiny$\mathbf{\pm 0.0}$\normalsize & $\mathbf{0.82}$ \tiny$\mathbf{\pm 0.0}$\normalsize & $0.84$ \tiny$\pm 0.0$  & $\mathbf{0.83}$ \tiny$\mathbf{\pm 0.0}$ & $\mathbf{0.83}$ \tiny$\mathbf{\pm 0.0}$ & $\mathbf{0.82}$ \tiny$\mathbf{\pm 0.0}$ \\
\rowcolor[gray]{.8} \multirow{-3}{*}{RFLPA} & CIFAR-10 & $0.74$ \tiny$\pm 2.3$\normalsize  & $0.70$ \tiny$\pm 1.8$\normalsize  & $\mathbf{0.70}$ \tiny$\mathbf{\pm 1.9}$\normalsize & $\mathbf{0.69}$ \tiny$\mathbf{\pm 1.8}$\normalsize & $0.74$ \tiny$\pm 2.3$\normalsize & $0.71$ \tiny$\pm 1.7$\normalsize & $\mathbf{0.70}$ \tiny$\mathbf{\pm 1.6}$\normalsize & $\mathbf{0.69}$ \tiny$\mathbf{\pm 0.8}$\normalsize \\
 \bottomrule
\end{tabular}
}
\end{tiny}
\end{table*}

\begin{figure}[ht]
\begin{center}
  \centering    \centerline{\includegraphics[width=1\columnwidth]{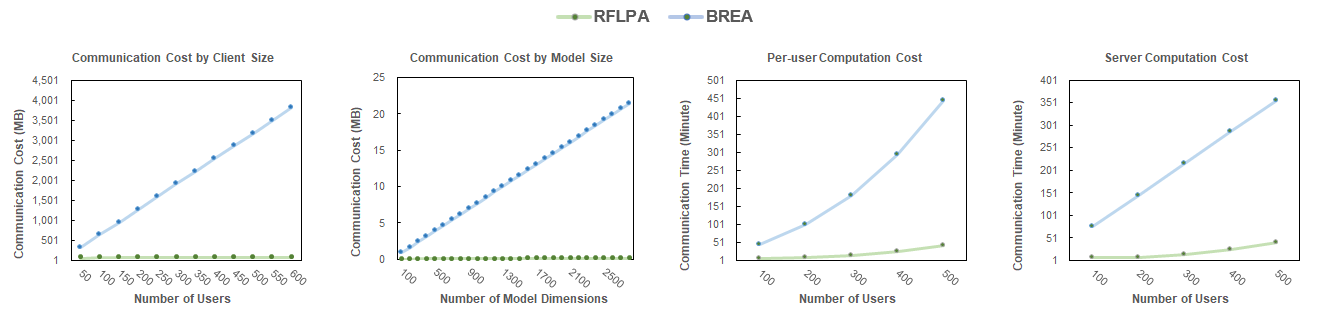}}
\caption{Per-iteration communication (left two) and computation cost (right two).}
\label{fig:comcost}
\end{center}
\end{figure}

\subsubsection{Other studies}
For other studies, we analyze the impact of iterations on accuracy (see Appendix \ref{app:acciter}), evaluate our protocol against additional attacks (see Appendix \ref{app:moreattack}), conduct further overhead analysis (see \ref{app:overhead}), and examine the performance under non-iid setting (see Appendix \ref{app:noniid}).

\section{Conclusion}
This paper proposes RFLPA, a robust privacy-preserving FL framework with SecAgg. Our framework leverages verifiable packed Shamir secret sharing to compute the cosine similarity between user and server update and conduct robust aggregation. We design a secret re-sharing algorithm to address the increased information leakage concern, and utilize encryption and signature techniques to ensure the security over server-mediated channel. Our approach achieves the reduced per-user communication overhead of $O(M+N)$. The empirical study demonstrates that: (1) RFLPA achieves competitive accuracies for up to $30\%$ poisoning adversaries compared with state-of-the-art defense methods. (2) The communication cost and computation cost for RFLPA is significantly lower than BERA by over $75\%$ under the same FL settings. 

\section{Discussion and Future Work}
\label{app:limitation}

\textbf{Collection of server data.} One important assumption is that the server is required to collect a small, clean root dataset. Such collection is affordable for most organizations as the required dataset is of small size, e.g., 200 samples. According to theoretical analysis, the convergence is guaranteed when the root dataset is representative of the overall training data. Empirical evidence presented in \cite{cao2021fltrust} suggests that the performance of the global model is robust even when the root dataset diverges slightly from the overall training data distribution. Furthermore, Appendix \ref{app:cleanremedy} proposes several alternative robust aggregation modules, such as KRUM and comparison with global model, to circumvent the assumption.


\noindent \textbf{Compatibility with other defense strategies.} RFLPA adopts a robust aggregation rule that computes the cosine similarity with server update. The framework can be easily generalized to distance-based method such as KRUM or multi-KRUM by substituting the robust aggregation module. However, extending the framework to rank-based defense methods may be more challenging. Existing SMC techniques for rank-based statistics requires $\log M$ rounds of communication, where $M$ is the range of input values \cite{aggarwal2004secure}. We leave the problem of communication-efficient rank-based robust FL to future work.


\noindent \textbf{Differential privacy guarantee.} Differential privacy (DP) \cite{nguyen2016collecting, berlioz2015applying} provides formal privacy guarantees to prevent information leakage. The combination of SMC and DP, also known as Distributed DP \cite{kairouz2021distributed}, reduces the magnitude of noise added by each user compared with pure local DP. However, adopting DP in the privacy-preserving robust FL framework is non-trivial, especially when bounding the privacy leakage of robustness metrics such as cosine similarity may sacrifice utility. We leave the problem of incorporating DP into the privacy-preserving robust FL framework to future work.

\printbibliography

\appendix


\newpage

\section{Notation Table}
\begin{table*}[!htb]
\caption{Notation table.}
\label{tab:notation}
\centering
\begin{small}
\scalebox{0.9}{
\begin{tabular}{ll|ll}
\toprule
 \textsc{Notation}& \textsc{Description} & \textsc{Notation}& \textsc{Description} \\
  \hline
$\mathbf{w}$ & Model parameter & $\mathbf{g}$ & Gradients \\
$D$, $D_0$, $D_i$ & Dataset & $\eta_i$ & Aggregation weight \\
$\gamma^t$ & Learning rate & $S$ & Set of participation clients \\
$N$ & Participating client size & $M$ & Model dimension \\
$\mathbf{V}^i$, $v_{jk}^i$ & Packed secret shares for gradients & $TS_i$ & Trust score \\
$l$ & \makecell[l]{\# of secrets packed at a polynomial\\ for gradient} & $p$ & \makecell[l]{\# of secrets packed at a polynomial \\for shares of partial dot product} \\
$\mathbf{cs}^i$, $cs_j^i$ &  Shares of partial cosine similarity & $\mathbf{nr}^i$, $nr_j^i$ & Shares of partial gradient norm square \\
$\mathbf{S}^i$, $s_{jk}^i$ & Packed secret shares of $\mathbf{cs}^j$ or $\mathbf{nr}^j$ & $\mathbf{\Tilde{h}}_k^i$ & Secret shares disaggregated along packed index \\
$x_k^i$ & Packed secret share of dot product & $\mathbf{h}_k^i$ & Secret shares aggregated along packed index \\
$e_i$ & Pre-determined secret point & $\alpha_i$ & Pre-selected elements for secret sharing \\
$B_{e_j}$ & \makecell[l]{$n$ by $n$ matrix whose $(i,k)$ \\ entry is $(\alpha_k-e_j)^{i-1}$} & $Chop_d$ & \makecell[l]{$n$ by $n$ matrix whose $(i,k)$ entry is $1$ \\if $1\leq i=k\leq d$ and $0$ otherwise} \\
 \bottomrule
\end{tabular}}
\end{small}
\end{table*}


\section{Details of Cryptographic Primitives}
\label{app:cryptoprimitive}
\subsection{Packed Shamir Secret Sharing}
The operations of Packed Shamir Secret Sharing performed on a finite field $\mathbb{F}_P$ for some prime number $P$. Denote $\{e_i\}_{i\in [l]}$ as the pre-determined secret point, and $\{\alpha_i\}_{i\in [d]}$ as the pre-selected elements for secret sharing. To share the secrets $\mathbf{g} = (g_1, g_2, ..., g_l)$, the user can generate a degree-$d$ polynomial function:
\begin{equation}
    \phi(x)=q(x) \Pi_{i=1}^l (x-e_i) + \sum_{i=1}^l g_1 L_i(x),
\end{equation}
where $q(x)$ is a random degree-$d-l$ polynomial, and $L_i(x)$ is the Lagrange polynomial $\frac{\Pi_{j\neq i} (x-e_j)}{\Pi_{j\neq i} (e_i-e_j)}$.

The shares sent to player $j$ is generated by:
\begin{equation}
\label{eq:packsss}
    s_{j} = \phi(\alpha_j).
\end{equation}

We use\ $\langle \mathbf{g} \rangle_d$ to denote the degree-$d$ packed secret shares of vector $\mathbf{g}$. The following properties holds for the packed sharing scheme:
\begin{itemize}
    \item $\langle \alpha \mathbf{x}+\beta \mathbf{y} \rangle_d = \alpha \langle \mathbf{x}\rangle_d+\beta \langle \mathbf{y}\rangle_d$
    \item $\langle \mathbf{x}*\mathbf{y} \rangle_{d_1+d_2} = \langle \mathbf{x}\rangle_{d_1}*\langle \mathbf{y}\rangle_{d_2}$
\end{itemize}

\subsection{Key Exchange}
Diffie–Hellman key exchange protocol consists of the following algorithms:
\begin{itemize}
    \item \textit{Generate parameters}: $pp=\mathbf{GenParam}(sp)$ set up the parameters, including prime number and primitive root, according to the security parameter.
    \item \textit{Key generation}: $(s_i^{SK}, s_i^{PK})=\mathbf{KEGen}(pp)$ generates the private-public key pairs for user $i$.
    \item \textit{Key derivation}: $s_{ij} = \mathbf{KEAgree}(s_i^{SK}, s_j^{PK})$ outputs the shared secret key between user $i$ and $j$.
\end{itemize}

\subsection{Symmetric Encryption}
The smmetric encryption scheme consists of the following algorithms:
\begin{itemize}
    \item \textit{Encryption}: $c=\mathbf{Enc}(m,k)$ encrypts message $m$ to cyphertext $c$ using key $k$.
    \item \textit{Decryption}: $m=\mathbf{Dec}(c,k)$ reverses cyphertext $c$ to message $m$ using key $k$.
\end{itemize}
To ensure correctness, we require that $m=\mathbf{Dec}(\mathbf{Enc}(m,k),k)$. For security, the encryption scheme should be indistinguishability under a chosen plaintext attack (IND-DPA) and integrity under ciphertext-only attack (INT-CTXT) \cite{bellare2000authenticated}.

\subsection{Signature Scheme}
The UF-CMA secure signature scheme that consists of a tuple of algorithms $(\mathbf{Gen}, \mathbf{Sign}, \mathbf{Verify})$:
\begin{itemize}
    \item \textit{Key generation}: Based on the security parameter $sp$, $(d^{SK},d^{PK})=\mathbf{SigGen}(sp)$ returns the private-public key pairs.
    \item \textit{Signing algorithm}: $\sigma = \mathbf{Sign}(d^{SK}, m)$ generates a signature $\sigma$ with secret key and message as input.
    \item \textit{Signature verification}: $\mathbf{Verify}(d^{PK}, m, \sigma)$ takes as input the public key, a message and a signature, and returns $1$ if the signature is valid and $0$ otherwise.
\end{itemize}

To proof the security of the signature scheme, we show that no adversary can forge a valid signature on an arbitrary message. Denote a UF-CMA secure signature scheme as $\rm{DS}=(k,\rm{Sign}, \rm{Verify})$, where $k$ is the security parameter. The UF-CMA advantage of an adversary A is defined as $\rm{Adv}_{DS} (A, k)=\mathbb{P}(\rm{Exp}_{\rm{DS}}^{\rm{uf-cma}}(A, k)=1)$, where $\rm{Exp}_{\rm{DS}}^{\rm{uf-cma}} (A,k)$ represents the experiments conducted by adversary A to produce a signature, and $\rm{Exp}_{\rm{DS}}^{\rm{uf-cma}} (A,k)=1$ means that A produced a valid signature. In a UF-CMA secure signature scheme, no probabilistic polynomial time (PPT) adversary is able to produce a valid signature on an arbitrary message with more than negligible probability. In other words, for all PPT adversaries A, there exists a negligible function $\epsilon$ such that $\rm{Adv}_{\rm{DS}} (A,k)\leq \epsilon(k)$.

\section{Comparison between Byzantine-robust aggregation rules}
\label{app:compaggrule}
To provide justification for our algorithm's utilization of FLTrust as the aggregation rule, we summarize the existing Byzantine-robust aggregation rules along four dimensions: (i) computation complexity, (ii) whether the algorithm needs prior knowledge about the number of poisoners, (iii) maximum number of poisoners, (iv) whether the algorithm is compatible with Shamir Secret Sharing (SSS). 
\begin{table*}[!htb]
\caption{Comparison between Byzantine-robust aggregation rules.}
\label{tab:compaggrule}
\centering
\begin{small}
\begin{tabular}{l cccc}
\toprule
 & \textbf{Computation complexity} & \textbf{\makecell{Need prior knowledge\\ about \# of poisoners} }& \textbf{\# of poisoners} & \textbf{\makecell{Compatible \\ with SSS}} \\
  \hline
KRUM & $O(N^2 (M+\log N))$ & Yes & $<50\%$ & Yes \\
Bulyan & $O(N^2 M)$ &	Yes & $<25\%$ &	No \\
Trim-mean & $O(MN \log N)$ & Yes & $<50\%$ & No \\
\rowcolor[gray]{.8} \textbf{FLTrust} & $\mathbf{O(MN))}$ & \textbf{No} & $\mathbf{<100\%}$ & \textbf{Yes} \\
 \bottomrule
\end{tabular}
\end{small}
\end{table*}

Among these dimensions, FLTrust demonstrates clear advantages over other robust aggregation rules:
\begin{itemize}
    \item \textit{Low computation cost:} for a system with $N$ users and $M$ model size, the computation cost of FLTrust is $O(MN)$, lower than existing methods that grow quadratically with $N$.
    \item \textit{No need of prior knowledge about number of poisoners:} the server does not need to know the number of malicious clients in advance to conduct robust aggregation.
    \item \textit{Defend against majority number of poisoners:} benefiting from the trusted root of clean dataset at the server, the aggregation rule we adopted can return robust result even when the number of poisoners is above 50\%.
    \item \textit{Compatible with Shamir Secret Sharing (SSS):} the method we adopted is compatible with the SSS algorithm. While for Bulyan and Trim-mean, there are some non-linear operations not supported by SSS.
\end{itemize}

\section{Algorithm of RFLPA}
\label{app:algorithm}
This section presents the our algorithm to conduct robust federated learning with secure aggregation. 

\begin{algorithm}[htb]
   \caption{RFLPA}
   \label{alg:BRFL}
\begin{algorithmic}
   \STATE \textbf{Input:} Local dataset $D_i$ of clients $i\in [N]$, root dataset $D_0$ at server, number of iterations $T$, security parameter $\kappa$.
   \STATE \textbf{Output:} Global model $\mathbf{w}^T$
   \STATE Clients set up encryption and signature key pairs $(c_i^{PK}, c_i^{SK})$, $(d_i^{PK}, d_i^{SK})$ $\leftarrow$ SetupKeys($N$, $\kappa$) for $i \in [N]$.
   \STATE Server initialize global model $\mathbf{w}^0$
   \FOR{$t\in [1,T]$} 
   \STATE Server conduct local update with root data, compute update norm $\|\mathbf{g}_0\|$, and create packed secret shares $\mathbf{v}_0$.
   \STATE Each clients from $\mathcal{U}_t$ download global model $\mathbf{w}^{t-1}$, corresponding shares of $\mathbf{v}_0$, and $\|\mathbf{g}_0\|$.
   \STATE Server obtain gradients $\mathbf{g}$ $\leftarrow$ RobustSecAgg($\mathcal{U}_t$, $\mathbf{w}^{t-1}$, $\mathbf{v}_0$, $\|\mathbf{g}_0\|$)
   \STATE Server update global model $\mathbf{w}^t \leftarrow\mathbf{w}^{t-1} - \gamma^t \mathbf{g}$
   \ENDFOR
\end{algorithmic}
\end{algorithm}

\begin{algorithm}[htb]
   \caption{SetupKeys}
   \label{alg:setupkey}
\begin{algorithmic}
   \STATE \textbf{Input:} number of clients $N$, security parameter $\kappa$.
   \STATE \textbf{Output:} key pairs $\{(d_i^{PK}, d_i^{SK})\}_{i\in [N]}$;
   \STATE \quad \quad \qquad  secret keys $\{k_{ij}\}_{i,j \in [N]}$.
   \STATE Each user $i\in [N]$ receive their signing key $d_i^{SK}$ from the trusted third party, as well as the verification keys $d_j^{PK}$ of all users $j\in [N]$.
   \STATE Each user $i\in [N]$ generate key pairs $(s_i^{SK},s_i^{PK})=\mathbf{KEGen}(sp)$, and create signature $\sigma_i=\mathbf{Sign}(d_i^{PK}, s_i^{PK})$.
   \STATE Users $i\in [N]$ send $(s_i^{PK}||\sigma_i)$, public key along with signature, to the server.
   \STATE Server distribute $\{(s_i^{PK}||\sigma_i)\}_{i\in [N]}$ to all users.
   \STATE Each user $i$ asserts that $\mathbf{Verify}(d^{PK}, s_j^{PK}, \sigma_j)=1$, and compute $k_{ij} = KEAgree(s_i^{SK}, s_j^{PK})$ for $j\in [N]\backslash i$.
\end{algorithmic}
\end{algorithm}

\begin{algorithm*}[!htb]
   \caption{RobustSecAgg}
   \label{alg:robustsecagg}
\begin{algorithmic}
   \STATE \textbf{Input:} Set of active clients in current iteration $\mathcal{U}_0$, global parameters $\mathbf{w}$ downloaded from server, packed secret shares of server update $\mathbf{v}_0$, norm of server update $\|\mathbf{g}_0\|$.
   \STATE \textbf{Output:} Global aggregated gradient $\mathbf{g}$
   \STATE \textbf{Round 1}:
   \STATE \textit{Client} $i$:
   \STATE \textbullet~ Generate local gradient $\mathbf{g}_i$
   \STATE \textbullet~ Generate packed secrets $\{\mathbf{v}_{ij}\}_{j\in \mathcal{U}_0}$, commitments $\mathcal{C}$ and witness $\{\mathbf{\omega}_{ij}\}_{j\in \mathcal{U}_0}$ for $\mathbf{g}_i$ from \ref{eq:packsss}, \ref{eq:commitment}, and \ref{eq:witness}, encrypt $\mathbf{c}_{ij} = \mathbf{Enc}(\mathbf{v}_{ij}||\mathbf{\omega}_{ij},k_{ij})$, and create signature $\sigma_{ij}=\mathbf{Sign}(d_i^{SK}, \mathbf{c}_{ij}||\mathcal{C})$ for $j\in [N] \backslash i$
   \STATE \textbullet~ Send $(\mathcal{C}||\{\mathbf{c}_{ij}\}_{j\in [N] \backslash i}||\{\sigma_{ij}\}_{j\in [N] \backslash i})$ to the server
   \STATE \textit{Server}:
   \STATE \textbullet~ Collect messages from at least $K$ clients (denote $\mathcal{U}_1$ the set of all respondents).
   \STATE \textbullet~ Send $(\mathcal{C}||\{\mathbf{c}_{ij}\}_{i\in \mathcal{U}_1\backslash j}||\{\sigma_{ij}\}_{i\in \mathcal{U}_1\backslash j})$ to client $j$ for $j\in \mathcal{U}_1$.
   \STATE \textbf{Round 2}:
   \STATE \textit{Client} $i$:
   \STATE \textbullet~ Receive $(\mathcal{C}||\{\mathbf{c}_{ji}\}_{j\in \mathcal{U}_1\backslash i}||\{\sigma_{ji}\}_{j\in \mathcal{U}_1\backslash i})$ from server, and assert that $\mathbf{Verify}(d_j^{PK}, \mathbf{c}_{ji}||\mathcal{C}, \sigma_{ji})=1$.
   \STATE \textbullet~ Recover $(\{\mathbf{v}_{ji}\}_{j\in \mathcal{U}_1\backslash i}, \{\mathbf{\omega}_{ji}\}_{j\in \mathcal{U}_1\backslash i})=\mathbf{Dec}(\mathbf{c}_{ji},k_{ji})$, and verify the secret shares $\{\mathbf{v}_{ji}\}_{j\in \mathcal{U}_1\backslash i}$ by testing \ref{eq:bilineartest}.
   \STATE \textbullet~ Compute local shares of partial norm $\{nr_j^i\}_{j\in \mathcal{U}_1}$ and partial cosine similarity $\{cs_j^i\}_{j\in \mathcal{U}_1}$ from \ref{eq:cosnorm}.
   \STATE \textbullet~ Construct packed secret shares $\{\mathbf{s}_{ik}\}_{k\in \mathcal{U}_1}$,  commitments $\mathcal{C}$, and witness $\{\mathbf{\omega}'_{ik}\}_{k\in \mathcal{U}_1}$ for $(\{nr_j^i\}_{j\in \mathcal{U}_1}, \{cs_j^i\}_{j\in \mathcal{U}_1})$, encrypt $\mathbf{c}'_{ik} = \mathbf{Enc}(\mathbf{s}_{ik}||\mathbf{\omega}'_{ik},k_{ik})$, and create signature $\sigma'_{ik}=\mathbf{Sign}(d_i^{SK}, \mathbf{c}'_{ik}||\mathcal{C})$ for $k\in [N] \backslash i$
   \STATE \textbullet~ Send $(\mathcal{C}||\mathbf{c}'_{ij}||\sigma'_{ij})$ for $j\in [N] \backslash i$ to the server
   \STATE \textit{Server}:
   \STATE \textbullet~ Collect messages from at least $K$ clients (denote $\mathcal{U}_2$ the set of all respondents).
   \STATE \textbullet~ Send $(\mathcal{C}||\{\mathbf{c}'_{ij}\}_{i\in \mathcal{U}_2\backslash j}||\{\sigma'_{ij}\}_{i\in \mathcal{U}_2\backslash j})$ to client $j$ for $j\in \mathcal{U}_2$.
   \STATE \textbf{Round 3}:
   \STATE \textit{Client} $i$:
   \STATE \textbullet~ Receive $(\mathcal{C}||\{\mathbf{c}'_{ji}\}_{j\in \mathcal{U}_2\backslash i}||\{\sigma'_{ji}\}_{j\in \mathcal{U}_2\backslash i})$ from server, and assert that $\mathbf{Verify}(d_j^{PK}, \mathbf{c}'_{ji}||\mathcal{C}, \sigma'_{ji})=1$.
   \STATE \textbullet~ Recover $(\{\mathbf{s}_{ji}\}_{j\in \mathcal{U}_2\backslash i}, \{\mathbf{\omega}'_{ji}\}_{j\in \mathcal{U}_2\backslash i})=\mathbf{Dec}(\mathbf{c}'_{ji},k_{ji})$, and verify the secret shares $\{\mathbf{s}'_{ji}\}_{j\in \mathcal{U}_2\backslash i}$ by testing \ref{eq:bilineartest}.
   \STATE \textbullet~ Obtain the final share of norm $\{\overline{nr}_j^i\}_{j\in |\mathcal{U}_1|/p}$ and cosine similarity $\{\overline{cs}_j^i\}_{j\in |\mathcal{U}_1|/p}$ from \ref{eq:multi2single}, \ref{eq:aggsubgroup}, and Reed-Solomon decoding.
   \STATE \textbullet~ Send $(\{\overline{nr}_j^i\}_{j\in |\mathcal{U}_1|/p}, \{\overline{cs}_j^i\}_{j\in |\mathcal{U}_1|/p}$ to the server.
   \STATE \textit{Server}:
   \STATE \textbullet~ Collect messages from at least $K$ clients (denote $\mathcal{U}_3$ the set of all respondents).
   \STATE \textbullet~ Recover $\{\|\mathbf{g}_j\|^2\}_{j\in \mathcal{U}_1}$ using Reed-Solomon decoding, and assert that $\|\mathbf{g}_j\|^2\leq \|\mathbf{g}_0\|^2$, $\forall j \in \mathcal{U}_1$.
   \STATE \textbullet~ Recover $\{\langle\bar{\mathbf{g}}_i,  \mathbf{g}_0 \rangle\}_{j\in \mathcal{U}_1}$ using Reed-Solomon decoding, and compute the trust score $\{TS_j\}_{j \in \mathcal{U}_1}$ from \ref{eq:trustscore}.
   \STATE \textbullet~ Broadcast the trust score $\{TS_j\}_{j \in \mathcal{U}_1}$ to all users $i \in \mathcal{U}_3$.
   \STATE \textbf{Round 4}: 
   \STATE \textit{Client} $i$:
   \STATE \textbullet~ Compute local aggregation $\langle \mathbf{g} \rangle_i$ from \ref{eq:robustagg}, and send to the server.
   \STATE \textit{Server}:
   \STATE \textbullet~ Collect messages from at least $K$ clients.
   \STATE \textbullet~ Recover $\mathbf{g}$ using Reed-Solomon decoding.
\end{algorithmic}
\end{algorithm*}

Suppose that user $i$ create a packed secret shares $\mathbf{s}$ of $\mathbf{g}$ with polynomial $\phi(x)$. Providing $\kappa$ security, the user sets up generator $\psi$ and secret key $\alpha$, and also outputs the public key $(\psi, \psi^{\alpha}, ..., \psi^{\alpha^d})$ for a degree $d$ polynomial. To make the secret shares verifiable, the user broadcasts a commitment to the function:
\begin{equation}
\label{eq:commitment}
    \mathcal{C} = \psi^{\phi(\alpha)}.
\end{equation}

\section{Verifiable Packed Secret Sharing}
\label{app:vpss}
For each secret $s_l$, user $i$ computes a witness sent to the corresponding client in a private channel:
\begin{equation}
\label{eq:witness}
    w_l = \psi^{(\phi(\alpha)-\phi(l))/(\alpha-l)}.
\end{equation}

After receiving the commitment and witness, user $l$ can verify the secret by checking:
\begin{equation}
\label{eq:bilineartest}
    e(\mathcal{C}, \psi)=e(w_l,\psi^{\alpha}/\psi^l) e(\psi,\psi)^{\phi(l)},
\end{equation}
where $e(\cdot)$ denotes a symmetric bilinear pairing. 

The correctness and secrecy of the protocol are guarantee by the discrete logarithm (DL) \cite{alfred1997handbook}, $t$-polynomial Diffie-Hellman ($t$-polyDH) \cite{kate2010constant}, and $t$-Strong Diffie-Hellman ($t$-SDH) 
\cite{boneh2004short} assumptions. 

\section{Explanation of Secret Re-sharing}
\label{app:secretreshare}
For $m\in [\lceil N/p \rceil]$, the shares of secret $cs_{(m-1)p+k}^i$ for some $k\in [p]$ can be represented as:
\begin{equation}
\begin{gathered}
\setstacktabbedgap{3pt}
    \parenMatrixstack{
    s_{1m}^i & \dots & s_{Nm}^i
    } = \parenMatrixstack{
    cs_{(m-1)p+k}^i & \theta_1 & \dots & \theta_d & 0 & \dots & 0
    } \times B_{e_k},
\end{gathered}
\end{equation}
where $\{\theta_j\}_{j\in [d]}$ are random integers.

Hence, the user side computation of \ref{eq:multi2single} is the same as:
\begin{equation}
\begin{gathered}
\setstacktabbedgap{3pt}
    \parenMatrixstack{
    s_{1m}^1 & \dots & s_{1m}^N\\
    \vdots & \ddots & \vdots\\
    s_{Nm}^1 & \dots & s_{Nm}^N
    } B_{e_j}^{-1} Chop_d B_{e_j'} = B_{e_k}^T\\
    \times \parenMatrixstack{
    cs_{(m-1)p+k}^1 &\dots & cs_{(m-1)p+k}^N\\
    \vdots & \ddots & \vdots
    } B_{e_j}^{-1} Chop_d.
\end{gathered}
\end{equation}

The aggregation of new secret and reconstruction of $\{x_m^j\}$ is equivalent to taking the first column of:
\begin{equation}
\begin{gathered}
\setstacktabbedgap{3pt}
    B_{e_k}^T\parenMatrixstack{
    cs_{(m-1)p+k}^1 &\dots & cs_{(m-1)p+k}^N\\
    \vdots & \ddots & \vdots
    } \\
    \times \left(B_{e_1}^{-1}+\dots+ B_{e_l}^{-1}\right) Chop_d.
\end{gathered}
\end{equation}
Since $\mathbf{cs}^j$ is a packed secret share of the partial cosine similarity, it follows that:
\begin{equation}
\begin{gathered}
\setstacktabbedgap{3pt}
    \parenMatrixstack{
    cs_{h}^1 &\dots & cs_{h}^N} B_{e_j}^{-1} Chop_d = \parenMatrixstack{\sum_{(j-1)l< i \leq jl} \bar{g}_{hi}g_{0i} & \dots},
\end{gathered}
\end{equation}
meaning that the first elements gives the partial cosine similarity.

Therefore, the final shares sent to server $\{x_m^j\}$ can be formulated as:
\begin{equation}
\setstacktabbedgap{3pt}
     \parenMatrixstack{
    x_m^1 &\dots & x_m^N} = \parenMatrixstack{\sum_{i} \bar{g}_{m(p-1)+h,i} g_{0i} & \theta_1 & \dots & \theta_d & 0 & \dots} B_{e_h},
\end{equation}
for $h\in (m(p-1), mp]$. Therefore, the server could retrieve the dot product by Reed-Solomon decoding, which is equivalent to multiplying $\{B_{e_h}^{-1}\}_{h\in (m(p-1), mp]}$ and obtaining the first element.

\section{Details of Complexity Analysis}
\label{det:complexity analysis}

\textbf{User computation}: User's computation cost can be broken as: (1) generating packed secret shares of update ($O(M+N)\log^2 N)$ complexity \cite{kung1973fast}); (2) computing shares of partial gradient norm square and cosine similarity ($O(M+N))$ complexity); (3) creating packed secret shares of partial gradient norm square and cosine similarity ($O(N \log^2 N)$ complexity); (4) deriving final secret shares of gradient norm square and cosine similarity ($O(N^2\log^2 N)$ complexity). Therefore, each user's computation cost is $O((M+N^2)\log^2 N)$.

\textbf{User communication}: User's communication cost can be broken as: (1) downloading parameters from server ($O(M)$ messages); (2) sending and receiving secret shares of gradient ($O((M, N))$ messages); (3) sending and receiving secret shares of partial gradient norm square and cosine similarity ($O(N)$ messages); (4) sending final shares of gradient norm square and cosine similarity ($O(1)$ messages); (5) receiving trust scores from the server ($O(N)$ messages); (6) sending shares of aggregated update to the server ($O(M/N+1)$ messages). Hence, each user's communication cost is $O(M+N)$.

\textbf{Server computation}: The server's computation cost can be broken as: (1) recovering gradient norm square and cosine similarity by Reed-Solomon decoding ($O(N \log^2 N \log \log N)$ complexity \cite{gao2003new}); (2) computing the trust score of each user ($O(N)$ complexity); (3) decoding the aggregated global gradient ($O(M+N)\log^2 N \log \log N)$ complexity). Therefore, the server's computation cost is $O((M+N) \log^2 N \log \log N)$.

\textbf{Server communication}: The server's communication cost can be broken as: (1) distributing parameters to clients ($O(MN)$ messages); (2) sending and receiving secret shares of user update ($O((M+N)N\})$ messages); (3) sending and receiving secret shares of partial gradient norm square and cosine similarity ($O(N^2)$ messages); (4) receiving final shares of gradient norm square and cosine similarity ($O(N)$ messages); (5) broadcasting trust scores to clients ($O(N^2)$ messages); (6) receiving shares of aggregated update from clients ($O(M+N)$ messages). Overall, the server's communication cost is $O((M+N)N)$.

\section{Proof of Theorem \ref{thm:security}}
\label{app:security}
\begin{proof}
    We utilize the standard hybrid argument to prove the theorem. we define a PPT simulator {\rm SIM} through a series of (polynomially many) subsequent to ${\rm REAL}_{\mathcal{C}}^{\mathcal{U},t,\kappa}$, so that the view of $\mathcal{C}$ in {\rm SIM} is computationally indistinguishable from that in ${\rm REAL}_{\mathcal{C}}^{\mathcal{U},t,\kappa}$.

    ${\rm Hyb_1}$: In the hybrid, each honest user from $\mathcal{U}_1\backslash \mathcal{C}$ encrypts shares of a uniformly random vector, instead of the raw gradients. The properties of Shamir's secret sharing ensure that the distribution of any $|\mathcal{C}\backslash \{S\}|<t$ shares of raw gradients is identical to that of any equivalent length vector, and IND-CPA security guarantees that the view of server is indistinguishable in both cases. Hence, this hybrid is identical from the previous one.

    ${\rm Hyb_2}$: In the hybrid, the simulator aborts if $\mathcal{C}$ provides any of the honest user $i$ with a signature on $j$'s message, $\mathbf{c}_{ji}$, but the user couldn't produce the same signature given the public key (in round 2). The security of the signature scheme guarantees that this hybrid is indistinguishable from the previous one.

    ${\rm Hyb_3}$: In this hybrid, {\rm SIM} aborts if any of the honest user $i$ fails to verify the secret shares $\mathbf{s}_{ji}$ from user $j$ by checking \ref{eq:bilineartest}. The the DL, $t$-polyDH, and $t$-SDH assumptions guarantee that this hybrid is identical from the previous one.

    ${\rm Hyb_4}$: In the hybrid, each honest user from $\mathcal{U}_2\backslash \mathcal{C}$ encrypts shares of a uniformly random vector rather than partial norm and cosine similarity. The properties of Shamir's secret and IND-CPA security ensure that this hybrid is indistinguishable from the previous one.

    ${\rm Hyb_5}$: In the hybrid, the simulator aborts if $\mathcal{C}$ provides any of the honest user $i$ with a signature on $j$'s message, $\mathbf{c}'_{ji}$, but the user couldn't produce the same signature given the $j$'s key (in round 3). Because of the security of the signature scheme, this hybrid is indistinguishable from the previous one.

    ${\rm Hyb_6}$: This hybrid is defined as ${\rm Hyb_3}$, with the only difference that {\rm SIM} verify the secret shares $\mathbf{s}'_{ji}$ in round 3. This hybrid is indistinguishable from the previous one under DL, $t$-polyDH, and $t$-SDH assumptions.

    The above changes do not modify the views seen by the colluding parties, and the hybrid doesn't make use of the honest users' input. Therefore, the output of {\rm SIM} is computationally indistinguishable from the output of ${\rm REAL}_{\mathcal{C}}^{\mathcal{U},t,\kappa}$, and this concludes the proof.
    
\end{proof}

\section{Proof of Theorem \ref{convergence}}
Denote $\mathbf{\bar{g}}^t = \sum_i \eta_i \mathbf{\bar{g}}_i$ be the aggregated gradients at iteration $t$.
\begin{lemma}
For arbitrary number of adversarial clients, the distance between $\mathbf{\bar{g}}^t$ and $\nabla F(\mathbf{w}^t)$ is bounded by:
\begin{equation}
    \|\mathbf{\bar{g}}^t-\nabla F(\mathbf{w}^t)\| \leq 3\|\mathbf{g}_0^t - \nabla F(\mathbf{w}^t) \| + 2\|\nabla F(\mathbf{w}^t)\| + \frac{\sqrt{d}}{q}.
\end{equation}
\end{lemma}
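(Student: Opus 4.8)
The plan is to exploit the single structural feature that makes the FLTrust-style rule robust: the aggregate $\mathbf{\bar g}^t = \sum_i \eta_i \mathbf{\bar g}_i$ is a \emph{convex combination} of normalized client gradients whose norms are all capped by $\|\mathbf{g}_0^t\|$. First I would record that the weights $\eta_i = TS_i / \sum_j TS_j$ from \ref{eq:robustagg} are nonnegative and sum to $1$, treating the degenerate case $\sum_j TS_j = 0$ separately (there $\mathbf{\bar g}^t = 0$, so the claim reduces to $\|\nabla F(\mathbf{w}^t)\| \le 3\|\mathbf{g}_0^t-\nabla F(\mathbf{w}^t)\| + 2\|\nabla F(\mathbf{w}^t)\| + \sqrt d/q$, which holds since the right-hand side already exceeds $2\|\nabla F(\mathbf{w}^t)\|$). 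The crucial point for tolerating an \emph{arbitrary} number of adversaries is that every $\mathbf{\bar g}_i$, honest or malicious, satisfies $\|\mathbf{\bar g}_i\| \le \|\mathbf{g}_0^t\|$: the normalization step together with the server's verification that $\|\mathbf{\bar g}_i\| \le \|\mathbf{g}_0\|$ (enforced by the rounding function $Q$) caps the norm no matter what a client submits. This uniform cap is precisely what neutralizes adversarial magnitude.

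Second, convexity and the triangle inequality immediately give $\|\mathbf{\bar g}^t\| \le \sum_i \eta_i \|\mathbf{\bar g}_i\| \le \|\mathbf{g}_0^t\|$, so the aggregate inherits the server gradient's norm bound. Third, I would apply the triangle inequality twice. Writing $\mathbf{g}_0$ for $\mathbf{g}_0^t$ and $\nabla$ for $\nabla F(\mathbf{w}^t)$, we have $\|\mathbf{\bar g}^t - \nabla\| \le \|\mathbf{\bar g}^t - \mathbf{g}_0\| + \|\mathbf{g}_0 - \nabla\| \le (\|\mathbf{\bar g}^t\| + \|\mathbf{g}_0\|) + \|\mathbf{g}_0 - \nabla\| \le 2\|\mathbf{g}_0\| + \|\mathbf{g}_0 - \nabla\|$, and then the bound $\|\mathbf{g}_0\| \le \|\mathbf{g}_0 - \nabla\| + \|\nabla\|$ converts $2\|\mathbf{g}_0\|$ into $2\|\mathbf{g}_0-\nabla\| + 2\|\nabla\|$. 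Collecting terms yields exactly $3\|\mathbf{g}_0 - \nabla\| + 2\|\nabla\|$, reproducing the first two summands with their coefficients $3$ and $2$.

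Finally, the $\sqrt d/q$ term is pure quantization bookkeeping. Since the protocol operates over $\mathbb{F}_p$ and effectively uses $Q(\mathbf{\bar g}_i)$ rather than the exact real normalized gradient, and $Q$ rounds each of the $d$ coordinates with error at most $1/q$, the per-client perturbation is at most $\sqrt d/q$ in $\ell_2$ norm; because $\{\eta_i\}$ forms a convex combination, this perturbation propagates to $\mathbf{\bar g}^t$ without amplification and is simply added to the bound. I expect the only genuine care to lie in two places: (i) justifying the uniform norm cap $\|\mathbf{\bar g}_i\| \le \|\mathbf{g}_0\|$ for \emph{adversarial} clients, which is exactly what the quantized normalization plus server-side verification deliver, and (ii) making the quantization estimate rigorous in the $\ell_2$ norm rather than coordinatewise. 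Both are routine once the convex-combination-of-bounded-vectors structure is isolated, which is the real content of the lemma.
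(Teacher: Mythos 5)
Your proposal is correct and follows essentially the same route as the paper's proof: a triangle-inequality decomposition around $\mathbf{g}_0$, the uniform norm cap $\|\mathbf{\bar{g}}_i\| \leq \|\mathbf{g}_0\|$ (which holds for honest and malicious clients alike thanks to normalization, the rounding toward zero, and server verification), and the final substitution $\|\mathbf{g}_0\| \leq \|\mathbf{g}_0 - \nabla F(\mathbf{w}^t)\| + \|\nabla F(\mathbf{w}^t)\|$ to produce the coefficients $3$ and $2$. The only cosmetic difference is in the bookkeeping of the quantization term: the paper inserts the server's quantized gradient $\mathbf{\bar{g}}_0$ into the decomposition and charges $\sqrt{d}/q$ to $\|\mathbf{\bar{g}}_0 - \mathbf{g}_0\|$, whereas you charge it to per-client rounding propagated through the convex combination (and you additionally handle the degenerate all-zero trust-score case, which the paper leaves implicit).
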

\begin{proof}
It follows that:
\begin{equation}
\begin{gathered}
    \|\mathbf{\bar{g}}^t-\nabla L^t(\mathbf{w})\| = \|\sum_i \eta_i \mathbf{\bar{g}}_i-\nabla F^t(\mathbf{w})\|\\
    =\|\sum_i \eta_i \mathbf{\bar{g}}_i-\mathbf{\bar{g}}_0+\mathbf{\bar{g}}_0-\mathbf{g}_0+\mathbf{g}_0-\nabla F^t(\mathbf{w})\|\\
    \leq \|\sum_i \eta_i \mathbf{\bar{g}}_i-\mathbf{\bar{g}}_0\| + \|\mathbf{\bar{g}}_0-\mathbf{g}_0\| + \|\mathbf{g}_0-\nabla F^t(\mathbf{w})\|\\
    \leq \sum_i \eta_i\|\mathbf{\bar{g}}_i\| + \|\mathbf{\bar{g}}_0\|+\|\mathbf{\bar{g}}_0-\mathbf{g}_0\|+\|\mathbf{g}_0-\nabla F^t(\mathbf{w})\|\\
    \overset{(a)}{\leq} 2 \|\mathbf{g}_0\| + \frac{\sqrt{d}}{q}+\|\mathbf{g}_0-\nabla F^t(\mathbf{w})\|\\
    \leq 3\|\mathbf{g}_0-\nabla F^t(\mathbf{w})\| + 2\|\nabla F^t(\mathbf{w})\|+\frac{\sqrt{d}}{q},
\end{gathered}
\end{equation}
where $(a)$ is because $\sum_i \eta_i = 1$, $\|\mathbf{\bar{g}}_i\|\leq \|\mathbf{g}_0\|$, and$\|\mathbf{\bar{g}}_0\|\leq \|\mathbf{g}_0\|$.
\end{proof}
\begin{lemma}
Under Assumption \ref{converge_ass1}, we have the following bound at iteration $t$:
\begin{equation}
    \|\mathbf{w}^{t}-\mathbf{w}^*-\gamma \nabla F(\mathbf{w}^t)\| \leq \sqrt{1-\mu^2/(4L_g^2)} \|\mathbf{w}^{t}-\mathbf{w}^*\|.
\end{equation}
\end{lemma}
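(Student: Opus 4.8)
The plan is to bound the squared quantity $\|\mathbf{w}^{t}-\mathbf{w}^*-\gamma \nabla F(\mathbf{w}^t)\|^2$ and take square roots only at the very end, since working with the square converts the Euclidean norm into an inner product that interacts cleanly with the convexity and smoothness hypotheses bundled in Assumption~\ref{converge_ass1}. First I would expand
\begin{equation}
\|\mathbf{w}^{t}-\mathbf{w}^*-\gamma \nabla F(\mathbf{w}^t)\|^2 = \|\mathbf{w}^{t}-\mathbf{w}^*\|^2 - 2\gamma \langle \nabla F(\mathbf{w}^t), \mathbf{w}^{t}-\mathbf{w}^*\rangle + \gamma^2 \|\nabla F(\mathbf{w}^t)\|^2,
\end{equation}
and invoke first-order optimality $\nabla F(\mathbf{w}^*)=\mathbf{0}$ so that each gradient term can be rewritten as a gradient \emph{difference} $\nabla F(\mathbf{w}^t)-\nabla F(\mathbf{w}^*)$, which is exactly the form the two hypotheses bound.

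The two middle steps then supply those hypotheses. From $\mu$-strong convexity I would lower-bound the cross term through the monotonicity inequality $\langle \nabla F(\mathbf{w}^t)-\nabla F(\mathbf{w}^*), \mathbf{w}^{t}-\mathbf{w}^*\rangle \geq \mu \|\mathbf{w}^{t}-\mathbf{w}^*\|^2$, and from the $L_g$-Lipschitz gradient I would upper-bound $\|\nabla F(\mathbf{w}^t)\| = \|\nabla F(\mathbf{w}^t)-\nabla F(\mathbf{w}^*)\| \leq L_g \|\mathbf{w}^{t}-\mathbf{w}^*\|$. Substituting both into the expansion collapses every term onto a single factor of $\|\mathbf{w}^{t}-\mathbf{w}^*\|^2$, yielding
\begin{equation}
\|\mathbf{w}^{t}-\mathbf{w}^*-\gamma \nabla F(\mathbf{w}^t)\|^2 \leq \left(1 - 2\gamma \mu + \gamma^2 L_g^2\right)\|\mathbf{w}^{t}-\mathbf{w}^*\|^2.
\end{equation}

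It remains to control the quadratic $1 - 2\gamma \mu + \gamma^2 L_g^2$ in the step size. Viewed as a function of $\gamma$ it is minimized at $\gamma = \mu/L_g^2$, where it attains $1-\mu^2/L_g^2 \leq 1-\mu^2/(4L_g^2)$; more generally any $\gamma$ between the two roots of $\gamma^2 L_g^2 - 2\gamma \mu + \mu^2/(4L_g^2)=0$ keeps the coefficient at or below $1-\mu^2/(4L_g^2)$. I would therefore take the algorithm's step size to lie in this admissible range so that $1-2\gamma\mu+\gamma^2 L_g^2 \leq 1-\mu^2/(4L_g^2)$, and finish by taking square roots and invoking monotonicity of $\sqrt{\cdot}$.

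The main obstacle I anticipate is bookkeeping around the constant rather than the inequality structure: the clean factor $\sqrt{1-\mu^2/(4L_g^2)}$ is deliberately looser than the tightest contraction this expansion can produce, so I must ensure the range of $\gamma$ used here is compatible with the (possibly tighter) step-size constraints imposed elsewhere in the convergence argument, and that $\mu < 2L_g$ so the radicand stays positive and the contraction factor remains in $(0,1)$. A secondary point to verify is that $\mathbf{w}^*$ is an unconstrained minimizer, legitimizing $\nabla F(\mathbf{w}^*)=\mathbf{0}$; if Assumption~\ref{converge_ass1} only provides convexity on a constrained domain, the optimality step would instead need the variational-inequality form of first-order conditions.
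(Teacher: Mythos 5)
Your proposal is correct, and it is worth noting up front that the paper does not actually prove this lemma at all: its entire ``proof'' is a pointer to Lemma 2 of the cited FLTrust paper (Cao et al.). The argument behind that cited lemma is precisely the one you outline --- expand $\|\mathbf{w}^{t}-\mathbf{w}^*-\gamma \nabla F(\mathbf{w}^t)\|^2$, use $\nabla F(\mathbf{w}^*)=\mathbf{0}$ to rewrite gradients as gradient differences, lower-bound the cross term by strong convexity, upper-bound $\|\nabla F(\mathbf{w}^t)\|\leq L\|\mathbf{w}^t-\mathbf{w}^*\|$ by smoothness, and then choose the step size so the resulting quadratic in $\gamma$ stays below the target constant. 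One cosmetic difference: the ``$4$'' in $1-\mu^2/(4L^2)$ comes out exactly (rather than as a deliberate loosening) if one uses the strong-convexity bound $\langle \nabla F(\mathbf{w}),\mathbf{w}-\mathbf{w}^*\rangle \geq \tfrac{\mu}{2}\|\mathbf{w}-\mathbf{w}^*\|^2$ together with the choice $\gamma=\mu/(2L^2)$, since then $1-\gamma\mu+\gamma^2L^2 = 1-\mu^2/(4L^2)$; your monotonicity variant $\langle \nabla F(\mathbf{w})-\nabla F(\mathbf{w}^*),\mathbf{w}-\mathbf{w}^*\rangle \geq \mu\|\mathbf{w}-\mathbf{w}^*\|^2$ gives the slightly tighter coefficient $1-2\gamma\mu+\gamma^2L^2$ and the bound follows on the admissible $\gamma$-window you computed. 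Either way the substance is identical.

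Two of your ``anticipated obstacles'' are in fact genuine defects of the statement as printed, and you were right to flag them. First, the lemma as stated has no step-size hypothesis, yet the inequality is false for arbitrary $\gamma$ (let $\gamma\to\infty$); the constraint on $\gamma$ is inherited silently from the cited FLTrust lemma and never surfaces in this paper's lemma, theorem, or proof. Second, the constant $L_g$ appearing in the bound is never defined anywhere in the paper --- the assumption only introduces $L$ and $L_1$, and the paper's own proof of the convergence theorem quietly switches to writing $\sqrt{1-\mu^2/(4L^2)}$ --- so your reading of $L_g$ as the gradient-Lipschitz constant $L$ is the only consistent one. Your caveat that $\nabla F(\mathbf{w}^*)=\mathbf{0}$ needs $\mathbf{w}^*$ to be an interior (unconstrained) minimizer is likewise correct and is also assumed implicitly.
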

\begin{proof}
Refer to lemma 2 in \cite{cao2021fltrust} for the proof.
\end{proof}

\begin{lemma}
Suppose Assumption \ref{converge_ass1}, \ref{converge_ass2}, \ref{converge_ass3} holds. For any $\delta \in (0,1)$, if $\Delta_1\leq \nu_1^2/\alpha_1$, $\Delta_2\leq \nu_2^2/\alpha_2$, we have:
\begin{equation}
    P\left\{\|\mathbf{g}_0 - \nabla F(\mathbf{w})\|\leq 8\Delta_2 \|\mathbf{w}-\mathbf{w}^*+4\Delta_1\|\right\}\geq 1-\delta,
\end{equation}
for any $\mathbf{w}\in \Theta \subset \left\{\mathbf{w}:\|\mathbf{w}-\mathbf{w}^*\|\leq r\sqrt{d}\right\}$ given some positive number $r$.
\end{lemma}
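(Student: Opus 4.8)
The plan is to bound the deviation $h(\mathbf{w}) := \mathbf{g}_0 - \nabla F(\mathbf{w})$ of the empirical server gradient (computed on the finite root dataset $D_0$) from the population gradient, \emph{uniformly} over the ball $\{\mathbf{w}:\|\mathbf{w}-\mathbf{w}^*\|\leq r\sqrt{d}\}$. The shape of the target bound, $8\Delta_2\|\mathbf{w}-\mathbf{w}^*\|+4\Delta_1$, dictates a decomposition about the optimum: writing $h(\mathbf{w}) = h(\mathbf{w}^*) + \big(h(\mathbf{w})-h(\mathbf{w}^*)\big)$, the constant piece $\sim\Delta_1$ should come from concentration of the gradient at $\mathbf{w}^*$, while the piece scaling linearly in $\|\mathbf{w}-\mathbf{w}^*\|$ should come from concentration of the gradient \emph{difference}, whose sub-exponential parameter grows proportionally to $\|\mathbf{w}-\mathbf{w}^*\|$ under Assumption \ref{converge_ass2}.

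First I would treat the point bound at $\mathbf{w}^*$. Using the sub-exponential assumption on the per-sample gradients (Assumption \ref{converge_ass1}) together with a vector Bernstein inequality, applied after covering the unit sphere $S^{d-1}$ with a $1/2$-net to pass from the scalar concentration of $\langle h(\mathbf{w}^*),u\rangle$ to the norm $\|h(\mathbf{w}^*)\|$, I expect to obtain $\|h(\mathbf{w}^*)\|\le 4\Delta_1$ on a high-probability event, provided $\Delta_1\le\nu_1^2/\alpha_1$ keeps the Gaussian branch of Bernstein's bound dominant. The factor $\sqrt{d\log 6+\log(3/\delta)}$ inside $\Delta_1$ is precisely the cost of the $1/2$-net on the sphere (cardinality $\le 6^d$) together with the union bound across the constituent failure events.

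The harder part, and the step I expect to be the main obstacle, is uniform control of the gradient-difference term $\widetilde{h}(\mathbf{w}) := (\mathbf{g}_0-\nabla F)(\mathbf{w}) - (\mathbf{g}_0-\nabla F)(\mathbf{w}^*)$ over the entire \emph{continuous} ball. For a fixed $\mathbf{w}$, Assumption \ref{converge_ass2} renders $\widetilde{h}(\mathbf{w})$ sub-exponential with parameter proportional to $\nu_2\|\mathbf{w}-\mathbf{w}^*\|$, yielding a pointwise bound of order $\Delta_2\|\mathbf{w}-\mathbf{w}^*\|$. To upgrade this to hold simultaneously for all $\mathbf{w}$, I would lay an $\epsilon$-net over the ball, union-bound the pointwise estimate across net points, and then control the fluctuation between an arbitrary $\mathbf{w}$ and its nearest net point using the smoothness/Lipschitz-gradient constants (hence $L_2=\max\{L,L_1\}$) supplied by Assumption \ref{converge_ass3}. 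The net resolution must be tuned so that the discretization error matches the statistical error; this balancing is exactly what produces the logarithmic terms $\log\frac{18L_2}{\nu_2}$ and $\tfrac12 d\log\frac{|D_0|}{d}$ that appear in $\Delta_2$. Combining the two pieces by the triangle inequality, and absorbing the net and union-bound constants into the factors $4$ and $8$, delivers the claimed bound with total failure probability at most $\delta$.

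Finally, I would remark that the two side conditions $\Delta_1\le\nu_1^2/\alpha_1$ and $\Delta_2\le\nu_2^2/\alpha_2$ are precisely what keeps every invocation of Bernstein's inequality in its sub-Gaussian regime, so the deviations scale like $\sqrt{(d+\log(1/\delta))/|D_0|}$ rather than $(d+\log(1/\delta))/|D_0|$. This follows the uniform gradient-concentration technique of \cite{yin2018byzantine} and the root-dataset concentration argument of \cite{cao2021fltrust}, which I would cite for the routine concentration estimates rather than reproduce in full.
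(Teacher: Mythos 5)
Your proposal is correct and matches the paper's argument in substance: the paper does not actually prove this lemma in-house, but defers entirely to Lemma~4 of FLTrust \cite{cao2021fltrust}, and your sketch (decomposing $\mathbf{g}_0-\nabla F(\mathbf{w})$ into the term at $\mathbf{w}^*$ plus the gradient-difference term, bounding the first with a $1/2$-net over the sphere of cardinality $\le 6^d$, and bounding the second uniformly over the ball with an $\epsilon$-net balanced against the smoothness constants) is precisely the Yin-et-al.-style uniform concentration argument carried out in that cited lemma. One correction you should make: your assumption labels are permuted — both sub-exponential conditions (at $\mathbf{w}^*$, parameters $(\nu_1,\alpha_1)$, and for the scaled gradient difference, parameters $(\nu_2,\alpha_2)$) are supplied by Assumption \ref{converge_ass3}, the smoothness constants $L$, $L_1$ (hence $L_2=\max\{L,L_1\}$) come from Assumption \ref{converge_ass1}, and Assumption \ref{converge_ass2} provides only the i.i.d.\ sampling of the root dataset that makes $\mathbf{g}_0$ an unbiased empirical average in the first place.
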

\begin{proof}
Refer to lemma 4 in \cite{cao2021fltrust} for the proof.
\end{proof}

\textbf{Proof of Theorem \ref{convergence}}: Given the lemmas above, we can proceed to prove Theorem \ref{convergence}. We have:
\begin{equation}
\begin{gathered}
    \|\mathbf{w}-\mathbf{w}^*\| \leq \|\mathbf{w}^{t-1}-\gamma \nabla F(w^{t-1})-\mathbf{w}^*\|+\gamma\|\mathbf{\bar{g}}^t-\nabla F(\mathbf{w}^t)\|\\
    \leq \|\mathbf{w}^{t-1}-\gamma \nabla F(w^{t-1})-\mathbf{w}^*\| + 3\gamma \|\mathbf{g}_0^t - \nabla F(\mathbf{w}^t) \|\\
    + 2\gamma \|\nabla F(\mathbf{w}^t)\| + \frac{\gamma\sqrt{d}}{q}\\ \leq \left(\sqrt{1-\mu^2/(4L^2)}+24\gamma \Delta_2+2\gamma L\right) \|\mathbf{w}^{t-1}-\mathbf{w}^*\|\\
    +12\gamma \Delta_1 + \frac{\gamma\sqrt{d}}{q}.
\end{gathered}
\end{equation}
Therefore, with probability at least $1-\delta$, it follows that:
\begin{equation}
    \|\mathbf{w}^t-\mathbf{w}^*\| \leq (1-\rho)^t \|\mathbf{w}^0-\mathbf{w}^*\| + +12\gamma \Delta_1 + \frac{\gamma\sqrt{d}}{q}.
\end{equation}

\section{Assumptions for convergence analysis \ref{sec:convergence}}
\label{app:converge_ass}

\begin{assumption}
\label{converge_ass1}
The expected risk function $F(\mathbf{w})$ is $\mu$-strongly convex and $L$-smooth for any $\mathbf{w}$, $\mathbf{\bar{w}}$:
\begin{equation}
\begin{gathered}
    F(\mathbf{\bar{w}}) \geq F(\mathbf{w}) + \langle \nabla F(\mathbf{w}), \mathbf{\bar{w}}-\mathbf{w} \rangle +\frac{\mu}{2} \|\mathbf{\bar{w}}-\mathbf{w}\|^2\\
    \|\nabla F(\mathbf{w}) - \nabla F(\mathbf{\bar{w}})\| \leq L \|\mathbf{\bar{w}}-\mathbf{w}\|.
\end{gathered}
\end{equation}
Moreover, the empirical loss function $L(D, \mathbf{w})$ is $L_1$-smooth probabilistically. For any $\delta \in (0,1)$, there exists an $L_1$ such that:
\begin{equation}
    P\left\{ \sup_{\mathbf{w}\neq \mathbf{\bar{w}}} 
    \frac{\|\nabla L(D, \mathbf{w})-\nabla L(D, \mathbf{\bar{w}})\|}{\|\mathbf{w}-\mathbf{\bar{w}}\|}\leq L_1\right\} \geq 1-\frac{\delta}{3}.
\end{equation}
\end{assumption}
\begin{assumption}
\label{converge_ass2}
The root dataset $D_0$ and clients' local dataset $D_i(i=1,2,...,n)$ are sampled independently from distribution $\chi$.
\end{assumption}
\begin{assumption}
\label{converge_ass3}
The gradients of the empirical loss function $\nabla L(D, \mathbf{w}^*)$ at the optimal model $\mathbf{w}^*$ is bounded. Furthermore, $h(D, \mathbf{w})=\nabla L(D, \mathbf{w})-\nabla L(D, \mathbf{w}^*)$ is also bounded. Specifically, $\langle \nabla L(D, \mathbf{w}^*), \mathbf{v} \rangle$ and $\langle h(D, \mathbf{w}) - \mathbb{E}[h(D, \mathbf{w})],  \mathbf{v}\rangle/\|\mathbf{w}-\mathbf{w}^*\|$ are sub-exponential for any unit vector $\mathbf{v}$. Formally, for $\forall |\lambda|\leq 1/\alpha_1$, $\forall |\lambda|\leq 1/\alpha_2$, $\mathbf{B}=\{\mathbf{v}:\|v\|=1\}$, it holds that:
\begin{equation}
\begin{gathered}
    \sup_{\mathbf{v}\in \mathbf{B}} \mathbb{E}[\exp(\lambda \langle \nabla L(D, \mathbf{w}^*), \mathbf{v} \rangle)] \leq e^{\nu_1^2 \lambda^2/2}\\
    \sup_{\mathbf{v}\in \mathbf{B}, \mathbf{w}} \mathbb{E}\left[\exp\left(\frac{\langle h(D, \mathbf{w}) - \mathbb{E}[h(D, \mathbf{w})],  \mathbf{v}\rangle}{\|\mathbf{w}-\mathbf{w}^*\|} \right)\right] \leq e^{\nu_2^2 \lambda^2/2}.
\end{gathered}
\end{equation}
\end{assumption}

\section{Experiments}
The experiments are conducted on a 16-core Ubuntu Linux 20.04 server with 64GB RAM and A6000 driver, where the programming language is Python.
\subsection{Datasets}
MNIST is a collection of handwritten digits, including 60,000 training and 10,000 testing images of $28\times28$ pixels. F-MNIST consists of 70,000 fashion images of size $28\times 28$ and is split into 60,000 training and 10,000 testing samples. CIFAR-10 is natural dataset that includes 60,000 $32\times32$ colour images in 10 classes, splitting into 50,000 training and 10,000 testing images.
\subsection{FL configuration}
\label{app:flconfig}
both datasets are split among 10,000 users and select 100 users in each iteration. The server stores $200$ clean samples as benchmark. We allow up to $20\%$ clients to drop out in each round, and a maximum of $30\%$ participating clients to collaborate with each other to reveal the secret. Therefore, we construct a secret sharing of degree $40$, considering the doubling of degree during dot product computation, and pack each $10$ elements into a secret.

\subsection{Hyper-Parameters}
The parameters are updated using Adaptive Moment Estimation (Adam) method with a learning rate of $0.01$. Each accuracy reported in the tables is an average of $5$ experiments, and each round of experiments runs for $200$ iterations. Both LDP and CDP adopt privacy parameter $\epsilon=3$ and $\delta=0.0001$.

\subsection{Comparison among Aggregation Frameworks}
\label{app:compareframe}
In Table \ref{tab:compareframe} we summarize the comparison among aggregation frameworks along four dimensions:
\begin{itemize}
    \item \textit{Robustness against malicious users:} most algorithms provide certain level of robustness against malicious users. Local DP is not that effective in defending malicious users according to our experiment results. Though Robust Federated Aggregation (RFA) \cite{pillutla2022robust} provides a robust aggregation protocol based on geometric median, the malicious users could freely manipulate the uploaded gradients for poisoning attacks.
    \item \textit{Privacy Protection against server:} whether the framework protect user’s plaintext gradient against server. Only PEFL, PBFL, ShieldFL, SecureFL \cite{hao2021efficient}, RoFL \cite{lycklama2023rofl}, ELSA \cite{rathee2023elsa}, BREA, and RFLPA achieves the goals of robustness and privacy simultaneously.
    \item \textit{Collusion threshold during model training:} the server could obtain users’ plaintext gradients if it colludes with more than the given level of parties. PEFL, PBFL, ShieldFL, SecureFL, and ELSA all rely on two non-colluding parties during model training to protect users’ message. The collaboration between the two non-colluding parties could compromise user’s privacy.
    \item \textit{MPC techniques:} the main multiparty computation techniques leveraged by the framework. PEFL, PBFL, ShieldFL, SecureFL, and ELSA are based on multi-party computation (MPC) or homomorphic encryption (HE), RoFL is based on zero-knowledge proof (ZKP), and BREA and RFLPA are based on secret sharing.
\end{itemize}
Furthermore, although RoFL and ELSA could defend against malicious users, they are designed specifically for a naive robust aggragation method, norm bounding. It's completely impractical to generalize these frameworks to more advance defense method such as Krum.

\begin{table*}[!htb]
\caption{Corse-grained comparison among Aggregation Frameworks. “/” denotes non-applicable. ELSA improves on RoFL regarding the the efficiency.}
\label{tab:compareframe}
\centering
\begin{small}
\begin{tabular}{l cccc}
\toprule
& \makecell{Robustness against \\malicious users} & \makecell{Privacy Protection\\ against server} & \makecell{Collusion threshold\\ during model training} & MPC techniques \\
  \hline
FedAvg & Yes & No & / & / \\

Bulyan & Yes & No & / & / \\

Trim-mean & Yes & No & / & / \\

KRUM & Yes & No & / & / \\

Central DP & Yes & No & / & / \\
\hline
Local DP & Not effective & Yes & / & / \\
RFA & No & Yes & / & / \\
\hline
PEFL & Yes & Yes & 1 & 	HE (Paillier) \\

PBFL & Yes & Yes & 1 & 	HE (CKKS) \\

ShieldFL & Yes & Yes & 1 & 	HE (Paillier) \\

SecureFL & Yes & Yes & 1 & MPC \& HE (BFV) \\
\hline
RoFL & Yes & Yes & $O(N)$ & ZKP\\
ELSA & Yes & Yes & 1 & 	MPC\\
\hline

BREA & Yes & Yes & $O(N)$ & Secret sharing \\

\rowcolor[gray]{.8} RFLPA & Yes & Yes & $O(N)$ & Secret sharing \\
 \bottomrule
\end{tabular}
\end{small}
\end{table*}

\subsection{Accuracies over Iterations}
\label{app:acciter}
Figure \ref{fig:accuracy} demonstrates the impact of different iterations on test accuracies for RFLPA, BREA and FedAvg using the MNIST dataset. The results reveal that the RFLPA algorithm displays comparable convergence regardless of the existence of attackers, while FedAvg exhibits significantly inferior convergence when $30\%$ attackers are present.

\begin{figure}[ht]
\begin{center}
  \centering    \centerline{\includegraphics[width=1\columnwidth]{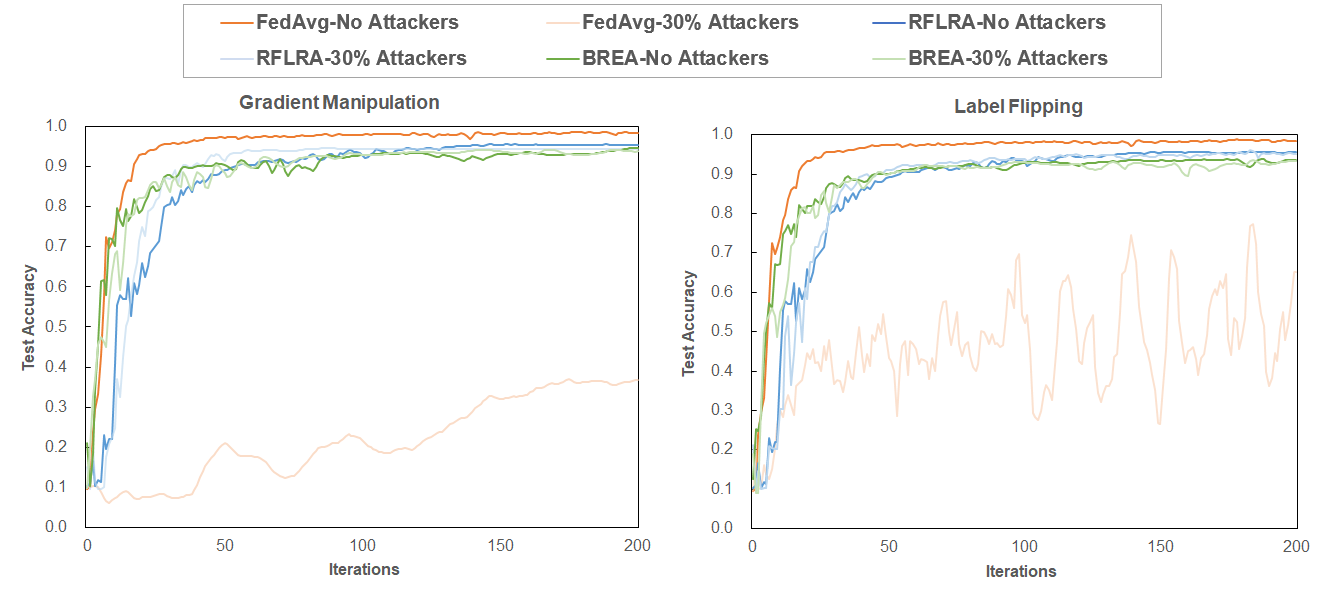}}
\caption{Test accuracy of RFLPA and FedAvg for different proportions of malicious users on MNIST dataset.}
\label{fig:accuracy}
\end{center}
\end{figure}

\subsection{Performance on Additional Attacks}
\label{app:moreattack}
\subsubsection{Poisoning Attacks}
We evaluate our protocol against several stealthier attacks: (1) KRUM attack \cite{fang2020local}, (2) BadNets \cite{gu2019badnets}, and (3) Scaling attack \cite{bagdasaryan2020backdoor}. KRUM attack is untarget attack, and BadNets as well as Scaling attack are backdoor attacks that specifically degrade the performance on triggered samples. We follow the same approach as in \cite{gu2019badnets} and \cite{bagdasaryan2020backdoor} to embed triggers in the targeted images.

Table \ref{tab:backdoor} compares the performance of RFLPA and FedAvg against the above attacks. For KRUM attack, RFLPA improves the accuracy on the general dataset over FedAvg by more than 1.6x. For the two backdoor attacks, RFLPA show trivial performance loss on the general and triggered dataset, as opposed to the significant degradation in accuracy for FedAvg.
\begin{table*}[!htb]
\caption{Accuracies on CIFAR-10 under varying proportions of attackers. For backdoor attacks, the values are presented as \textit{overall accuracy (backdoor accuracy)}.}
\label{tab:backdoor}
\centering
\scalebox{0.95}{
\begin{tabular}{l|ccc|ccc}
\toprule
  & \multicolumn{3}{c|}{FedAvg} & \multicolumn{3}{c}{RFLPA} \\
 \% of attackers & 10\% & 20\% & 30\% & 10\% & 20\% & 30\% \\
  \hline
KRUM attack & 0.27 & 0.12 & 0.11 & 0.71 & 0.70 & 0.70 \\
BadNets & 0.68 (0.54) & 0.67 (0.54) & 0.55 (0.28) & 0.71 (0.68) & 0.70 (0.68) & 0.69 (0.66) \\
Scaling attack & 0.70 (0.22) & 0.68 (0.21) & 0.54 (0.19) & 0.70 (0.69) & 0.70 (0.69) & 0.69 (0.69) \\
 \bottomrule
\end{tabular}
}
\end{table*}

\subsubsection{Inference Attacks}
We assess our RFLPA against passive inference attack using the Deep Leakage from Gradients (DLG) \cite{zhu2019deep}. It is important to note that experiments were not conducted for active inference attacks, where the server might alter users’ messages, such as secret shares, to access private data. This omission is due to the protection provided by the signature scheme, which safeguards message integrity and prevents the server from forging any user's messages.

DLG attempts to reconstruct the original image from the aggregated gradients. We conducted an attack on the CIFAR-10 dataset, using the specifications in Appendix \ref{app:flconfig}. The average peak signal-to-noise ratio (PSNR) of generated image with respect to original image is 11.27, much lower than the value of 36.5 when no secure aggregation is involved. Figure \ref{fig:inferatk} shows that the inferred images are far from the raw images under DLG attack.

\begin{figure}[!htp]
\begin{center}
\includegraphics[width=100mm]{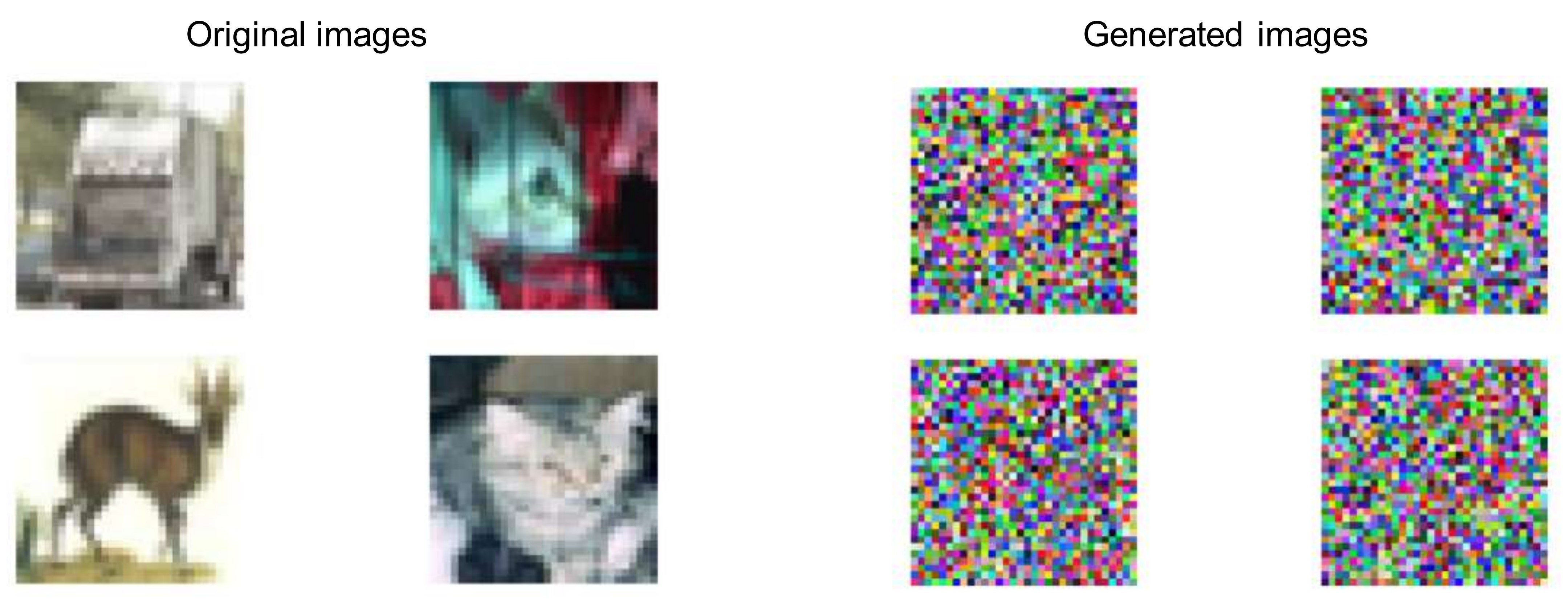}
\caption{Original and inferred image under RFLPA.}
\label{fig:inferatk}
\end{center}
\end{figure}

\subsection{Performance on Diverse Dataset}
\label{app:nlpcifar}
\subsubsection{Performance on Natural Language Processing (NLP) Dataset}
\label{app:nlp}
We evaluate the accuracy of our framework on two NLP datasets, Recognizing Textual Entailment (RTE) \cite{bentivogli2009fifth} and Winograd NLI (WNLI) \cite{levesque2012winograd}, by finetuning a distillBERT model \cite{sanh2019distilbert}. We present the performance for gradient manipulation attack in Table \ref{tab:accnlp}. The result demonstrates that for the two NLP datasets, RFLPA has robust accuracies in the presence of up to 30\% attackers.

\begin{table*}[!htb]
\caption{Accuracies on NLP dataset under different proportions of attackers.}
\label{tab:accnlp}
\centering
\begin{small}
\scalebox{0.95}{
\begin{tabular}{l|cccc|cccc}
\toprule
 & \multicolumn{4}{c|}{RTE} & \multicolumn{4}{c}{WNLI} \\
  \hline
Proportion of Attackers & No & $10\%$ & $20\%$ & $30\%$ & No & $10\%$ & $20\%$ & $30\%$ \\
\hline
FedAvg & $\mathbf{0.599}$   & $0.509$ & $0.487$ & $0.462$ & $\mathbf{0.619}$ & $0.563$ & $0.437$ & $0.437$ \\

BREA & $0.584$   & $\mathbf{0.592}$ & $0.570$ & $0.567$ & $0.592$ & $\mathbf{0.592}$ & $0.577$ & $0.563$ \\

\rowcolor[gray]{.8} RFLPA & $0.596$   & $0.582$ & $\mathbf{0.582}$ & $\mathbf{0.577}$ & $0.619$ & $0.592$ & $\mathbf{0.592}$ & $\mathbf{0.563}$\\
 \bottomrule
\end{tabular}
}
\end{small}
\end{table*}

\subsubsection{Performance on CIFAR-100 Dataset}
To test a more complex CV dataset, we evaluate our frameworks on CIFAR-100 \cite{krizhevsky2009learning} dataset using a ResNet-9 classifier. It can be observed in Figure \ref{tab:cifar100} that RFLPA significantly enhances the accuracy over FedAvg from 10\% attackers, by an average of 3.94x. Furthermore, RFLPA experiences little performance degradation in the presence of up to 30\% attackers.

\begin{table*}[!htb]
\caption{Accuracy on CIFAR-100 dataset under gradient manipulation attack.}
\label{tab:cifar100}
\centering
\scalebox{0.86}{
\begin{tabular}{l|cccc|cccc}
\toprule
  & \multicolumn{4}{c|}{FedAvg} & \multicolumn{4}{c}{RFLPA} \\
 \% of attackers & No & 10\% & 20\% & 30\% & No & 10\% & 20\% & 30\% \\
  \hline
Accuracy & $0.55$ \tiny$\mathbf{\pm 0.2}$\normalsize & $0.11$ \tiny$\mathbf{\pm 0.5}$\normalsize & $0.10$ \tiny$\mathbf{\pm 0.0}$\normalsize & $0.10$ \tiny$\mathbf{\pm 0.0}$\normalsize & $0.55$ \tiny$\mathbf{\pm 0.2}$\normalsize & $0.54$\tiny$\mathbf{\pm 0.1}$\normalsize & $0.50$ \tiny$\mathbf{\pm 0.1}$\normalsize & $0.49$ \tiny$\mathbf{\pm 0.2}$\normalsize \\
 \bottomrule
\end{tabular}
}
\end{table*}

\subsection{Overhead Analysis}
\label{app:overhead}
\subsubsection{Computation Time between RFLPA and HE-based methods}
\label{app:overheadhe}
To verify the practicability of RFLPA, we benchmark our framework with three HE-based methods, PEFL \cite{liu2021privacy}, PBFL \cite{miao2022privacy}, and ShieldFL \cite{ma2022shieldfl}. Table \ref{tab:compcosthe} presents the per-iteration computation time using a MNIST classifier (1.6M parameters) for the three algorithms and RFLPA. It can be observed that it takes 1.5 to 6.5 day to run the three HE-based algorithms for only a single iteration, which renders them impractical for real-life deployment.

\begin{table*}[!htb]
\caption{Computation cost (in minutes) with varying client size.}
\label{tab:compcosthe}
\centering
\begin{tabular}{l|cccc|cccc}
\toprule
 & \multicolumn{4}{c|}{Per-user Cost} & \multicolumn{4}{c}{Server Cost} \\
  \hline
Client size & $100$ & $200$ & $300$ & $400$ & $100$ & $200$ & $300$ & $400$ \\
 \hline
\rowcolor[gray]{.8} RFLPA & $3.41$ & $11.44$ & $24.51$ & $42.60$  & $6.68$ & $8.46$ & $15.00$ & $26.47$ \\
PEFL & $111.51$ & $109.27$ & $109.44$ & $110.13$  & $2156.20$ & $6056.98$ & $6785.71$ & $9365.46$ \\
PBFL & $12.65$ & $12.58$ & $12.73$ & $12.63$  & $1806.05$ & $3598.54$ & $5386.97$ & $7193.64$ \\
 ShieldFL & $111.73$ & $109.43$ & $109.25$ & $109.84$  & $2192.48$ & $6093.05$ & $6809.60$ & $9384.11$ \\
 \bottomrule
\end{tabular}
\end{table*}
\subsubsection{Ablation Study}
\label{app:ablation}
Considering that RFLPA and BREA leverage different robust aggregation rule, we conducted ablation study to demonstrate that the reduction in overhead is attributed to the scheme design of RFLPA rather than the inherent advantages of the underlying aggregation rule. In particular, we replace the aggregation module in RFLPA with KRUM, and presents the per-iteration communication and computation cost, respectively, in Table \ref{tab:comcostablation} and \ref{tab:comtimeablation}. It can be observed that even with substituting the aggregation module with KRUM in our framework, there’s still notable reduction in the communication cost benefiting from the design of our secret sharing algorithm.

\begin{table*}[!htb]
\caption{Communication cost (in MB) per client with varying client size with MNIST classifier (1.6M parameters). RFLPA (KRUM) replaces the aggregation rule with KRUM in RFLPA.}
\label{tab:comcostablation}
\centering
\begin{tabular}{l cccc}
\toprule
Client size & $300$ & $400$ & $500$ & $600$  \\
 \hline
\rowcolor[gray]{.9} RFLPA & $82.51$ & $82.52$ & $82.53$ & $82.54$ \\

BREA & $1909.92$ & $2544.45$ & $3178.98$ & $3813.51$ \\

\rowcolor[gray]{.7} RFLPA (KRUM) & $79.58$ & $82.25$ & $85.68$ & $89.87$\\

 \bottomrule
\end{tabular}
\end{table*}

\begin{table*}[!htb]
\caption{Computation cost (in minutes) with varying client size with MNIST classifier (1.6M parameters). RFLPA (KRUM) replaces the aggregation rule with KRUM in RFLPA.}
\label{tab:comtimeablation}
\centering
\begin{tabular}{l|cccc|cccc}
\toprule
 & \multicolumn{4}{c|}{Per-user Cost} & \multicolumn{4}{c}{Server Cost} \\
  \hline
Client size & $100$ & $200$ & $300$ & $400$ & $100$ & $200$ & $300$ & $400$ \\
 \hline
\rowcolor[gray]{.9} RFLPA & $3.41$ & $11.44$ & $24.51$ & $42.60$  & $6.68$ & $8.46$ & $15.00$ & $26.47$ \\

BREA & $44.73$ & $101.39$ & $182.27$ & $294.27$ & $75.85$ & $145.30$ & $216.96$ & $287.22$\\

\rowcolor[gray]{.7} RFLPA (KRUM) & $13.60$ & $35.56$ & $46.48$ & $75.78$ & $31.77$ & $34.04$ & $39.76$ & $62.81$\\
 \bottomrule
\end{tabular}
\end{table*}

\subsection{Non-IID Setting}
\label{app:noniid}

\subsubsection{Heterogenous Clients}
The previous experiments were conducted under the assumption that the local data of clients are independent and identically distributed (IID). To simulate the non-IID dataset, we adopted the setting in \cite{mcmahan2017communication} by sorting the data based on their labels and dividing them into 10,000 subsets. Consequently, the local data owned by most clients consist of only one label. 

\begin{figure}[!htp]
\begin{center}
\includegraphics[width=100mm]{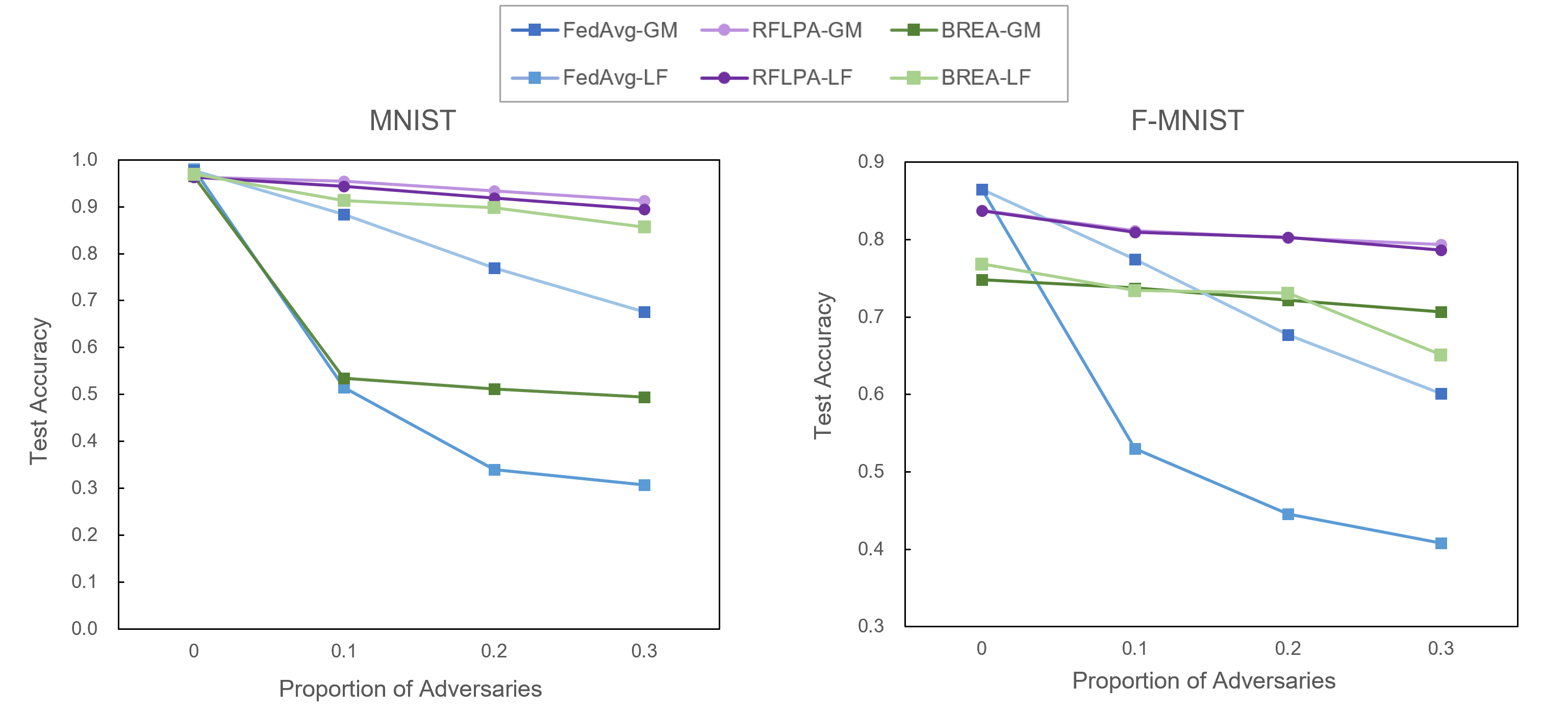}
\caption{Test accuracy on non-IID dataset. GM stands for gradient manipulation attack, and LF stands for label flipping attack.}
\label{fig:noniid}
\end{center}
\end{figure}

We compare the accuracy of RFLPA, BREA and FedAvg on non-IID dataset in Figure \ref{fig:noniid}. The RFLPA demonstrates resilient performance against poisoning attacks, even when the dataset is distributed non-identically among clients.

\subsubsection{Dynamic Data}
For dynamic settings, we consider the case where the data of the clients change during the federated training with the arrival of new data. To simulate the setting, we leverage Dirichlet Distribution Allocation (DDA) \cite{luo2021no} to sample non-iid dataset, and change the distribution for each client every 20 epochs. The parameter of the Dirichlet distribution is set to $\alpha=0.1$.

Table \ref{tab:dynamic} presents the accuracy against gradient manipulation attack. Our RFLPA demonstrates robust performance under the dynamic setting for up to 30\% attackers. The improvement of RFLPA over FedAvg is more than 2x when there are at least 20\% attackers.
\begin{table*}[!htb]
\caption{Accuracy under dynamic client data distribution against gradient manipulation attack.}
\label{tab:dynamic}
\centering
\scalebox{0.86}{
\begin{tabular}{l|cccc|cccc}
\toprule
  & \multicolumn{4}{c|}{MNIST} & \multicolumn{4}{c}{F-MNIST} \\
 \% of attackers & No & 10\% & 20\% & 30\% & No & 10\% & 20\% & 30\% \\
  \hline
FedAvg & $0.98$ \tiny$\mathbf{\pm 0.0}$\normalsize & $0.27$ \tiny$\mathbf{\pm 0.2}$\normalsize & $0.29$ \tiny$\mathbf{\pm 0.3}$\normalsize & $0.29$ \tiny$\mathbf{\pm 0.1}$\normalsize & $0.86$ \tiny$\mathbf{\pm 0.0}$\normalsize & $0.52$\tiny$\mathbf{\pm 0.1}$\normalsize & $0.21$ \tiny$\mathbf{\pm 0.2}$\normalsize & $0.18$ \tiny$\mathbf{\pm 0.2}$\normalsize \\
RFLPA & $0.96$ \tiny$\mathbf{\pm 0.0}$\normalsize & $0.94$ \tiny$\mathbf{\pm 0.0}$\normalsize & $0.92$\tiny$\mathbf{\pm 0.0}$\normalsize & $0.90$\tiny$\mathbf{\pm 0.0}$\normalsize & $0.83$\tiny$\mathbf{\pm 0.0}$\normalsize & $0.79$\tiny$\mathbf{\pm 0.0}$\normalsize & $0.79$\tiny$\mathbf{\pm 0.1}$\normalsize & $0.77$\tiny$\mathbf{\pm 0.0}$\normalsize \\ 
 \bottomrule
\end{tabular}
}
\end{table*}

\subsection{Integration with Other Aggregation Protocols}
\label{app:cleanremedy}
The robust aggregation rule of RFLPA is based on FLTrust, requiring a clean root data set on server side. Suppose we cannot get any clean root dataset even if the required size is small, it is feasible to replace the aggregation protocol with other robust aggregation algorithms to circumvent the assumption.

First, our algorithm can be integrated with KRUM-based method by substituting the aggregation module with KRUM. Though KRUM incurs greater cost than the original method, Appendix \ref{app:ablation} shows that there is a notable reduction in communication and computation cost compared with BREA, benefiting from the design of our secret sharing algorithm. The accuracy of RFLPA (KRUM) is expected to be the same as BREA, as both utilize the same aggregation rule.

Another alternative is to compute the cosine similarity with global weights. Specifically, we can compute the cosine similarity between each local update and the global weights as follows \cite{yaldiz2023secure}:
\begin{equation}
    cos(\mathbf{w}_i^t, \mathbf{w}_G^{t-1}) = \frac{\langle\mathbf{w}_i^t, \mathbf{w}_G^{t-1}\rangle}{\|\mathbf{w}_i^t\|\|\mathbf{w}_G^{t-1}\|}
\end{equation}
, and filter out the clients with similarity smaller than a pre-specified threshold, which is set to 0 in our evaluation.

From Table \ref{tab:gw}, we can observe that compared with FedAvg, RFLPA-GW effectively improves the accuracy in the presence of attackers. Noted that the communication and computation cost of RFLPA-GW is at the same scale of RFLPA’s original level, as both compute the cosine similarity with a single baseline.
\begin{table*}[!htb]
\caption{Accuracy for defense based on global weight under different proportions of attackers. RFLPA-GW replaces the robust aggregation rule in RFLPA with the method based on cosine similarity with global weight.}
\label{tab:gw}
\centering
\scalebox{0.85}{
\begin{tabular}{l|cccc|cccc}
\toprule
  & \multicolumn{4}{c|}{MNIST} & \multicolumn{4}{c}{F-MNIST} \\
 \% of attackers & No & 10\% & 20\% & 30\% & No & 10\% & 20\% & 30\% \\
  \hline
FedAvg & $\mathbf{0.98}$ \tiny$\mathbf{\pm 0.0}$\normalsize  & $0.46$ \tiny$\pm 0.1$\normalsize & $0.40$ \tiny$\pm 0.1$\normalsize & $0.32$ \tiny$\pm 0.0$\normalsize & $\mathbf{0.88}$ \tiny$\mathbf{\pm 0.0}$\normalsize  & $0.55$ \tiny$\pm 0.0$\normalsize & $0.51$ \tiny$\pm 0.0$\normalsize & $0.45$ \tiny$\pm 0.1$\normalsize \\
RFLPA-GW & $0.98$ \tiny$\mathbf{\pm 0.0}$\normalsize & $0.95$ \tiny$\mathbf{\pm 0.1}$\normalsize & $0.92$\tiny$\mathbf{\pm 0.0}$\normalsize & $0.91$\tiny$\mathbf{\pm 0.1}$\normalsize & $0.90$\tiny$\mathbf{\pm 0.0}$\normalsize & $0.80$\tiny$\mathbf{\pm 0.1}$\normalsize & $0.77$\tiny$\mathbf{\pm 0.0}$\normalsize & $0.75$\tiny$\mathbf{\pm 0.0}$\normalsize \\ 
 \bottomrule
\end{tabular}
}
\end{table*}

\section{Impact Statement}
\label{app:impact}
Our work in developing a robust federated learning framework (RFLPA) addresses significant challenges in privacy and security in federated learning (FL), presenting substantial benefits in data protection and carrying broader societal implications. The advancements in safeguarding data privacy bolster ethical standards in data handling, yet they may raise concerns in scenarios requiring data transparency. Our efforts contribute to the technical evolution of FL but also underscore the need for ongoing ethical considerations in the face of rapidly advancing machine learning technologies.

\newpage
\section*{NeurIPS Paper Checklist}

\begin{enumerate}

\item {\bf Claims}
    \item[] Question: Do the main claims made in the abstract and introduction accurately reflect the paper's contributions and scope?
    \item[] Answer: \answerYes{} 
    \item[] Justification: 

\item {\bf Limitations}
    \item[] Question: Does the paper discuss the limitations of the work performed by the authors?
    \item[] Answer: \answerYes{} 
    \item[] Justification: See Section \ref{app:limitation} Discussion and Future Work.

\item {\bf Theory Assumptions and Proofs}
    \item[] Question: For each theoretical result, does the paper provide the full set of assumptions and a complete (and correct) proof?
    \item[] Answer: \answerYes{} 
    \item[] Justification: 

    \item {\bf Experimental Result Reproducibility}
    \item[] Question: Does the paper fully disclose all the information needed to reproduce the main experimental results of the paper to the extent that it affects the main claims and/or conclusions of the paper (regardless of whether the code and data are provided or not)?
    \item[] Answer: \answerYes{} 
    \item[] Justification: 

\item {\bf Open access to data and code}
    \item[] Question: Does the paper provide open access to the data and code, with sufficient instructions to faithfully reproduce the main experimental results, as described in supplemental material?
    \item[] Answer: \answerYes{} 
    \item[] Justification: 

\item {\bf Experimental Setting/Details}
    \item[] Question: Does the paper specify all the training and test details (e.g., data splits, hyperparameters, how they were chosen, type of optimizer, etc.) necessary to understand the results?
    \item[] Answer: \answerYes{} 
    \item[] Justification: 

\item {\bf Experiment Statistical Significance}
    \item[] Question: Does the paper report error bars suitably and correctly defined or other appropriate information about the statistical significance of the experiments?
    \item[] Answer: \answerYes{} 
    \item[] Justification: 

\item {\bf Experiments Compute Resources}
    \item[] Question: For each experiment, does the paper provide sufficient information on the computer resources (type of compute workers, memory, time of execution) needed to reproduce the experiments?
    \item[] Answer: \answerYes{} 
    \item[] Justification: 
    
\item {\bf Code Of Ethics}
    \item[] Question: Does the research conducted in the paper conform, in every respect, with the NeurIPS Code of Ethics \url{https://neurips.cc/public/EthicsGuidelines}?
    \item[] Answer: \answerYes{} 
    \item[] Justification: 

\item {\bf Broader Impacts}
    \item[] Question: Does the paper discuss both potential positive societal impacts and negative societal impacts of the work performed?
    \item[] Answer: \answerYes{} 
    \item[] Justification: See Appendix \ref{app:impact}.

\item {\bf Safeguards}
    \item[] Question: Does the paper describe safeguards that have been put in place for responsible release of data or models that have a high risk for misuse (e.g., pretrained language models, image generators, or scraped datasets)?
    \item[] Answer: \answerNA{} 
    \item[] Justification: 

\item {\bf Licenses for existing assets}
    \item[] Question: Are the creators or original owners of assets (e.g., code, data, models), used in the paper, properly credited and are the license and terms of use explicitly mentioned and properly respected?
    \item[] Answer: \answerYes{} 
    \item[] Justification: 

\item {\bf New Assets}
    \item[] Question: Are new assets introduced in the paper well documented and is the documentation provided alongside the assets?
    \item[] Answer: \answerNA{} 
    \item[] Justification: 

\item {\bf Crowdsourcing and Research with Human Subjects}
    \item[] Question: For crowdsourcing experiments and research with human subjects, does the paper include the full text of instructions given to participants and screenshots, if applicable, as well as details about compensation (if any)? 
    \item[] Answer: \answerNA{} 
    \item[] Justification: 

\item {\bf Institutional Review Board (IRB) Approvals or Equivalent for Research with Human Subjects}
    \item[] Question: Does the paper describe potential risks incurred by study participants, whether such risks were disclosed to the subjects, and whether Institutional Review Board (IRB) approvals (or an equivalent approval/review based on the requirements of your country or institution) were obtained?
    \item[] Answer:\answerNA{} 
    \item[] Justification: 

\end{enumerate}

\end{document}